\newcommand{\comment}[1]{}
\algnewcommand\algorithmicto{\textbf{to}}
\algnewcommand\RETURN{\State \textbf{return} }
\DeclareMathOperator*{\argmin}{arg\,min}
\newtheorem{definition}{Definition}
\newtheorem{lemma}{Lemma}
\newtheorem{theorem}{Theorem}
\newcommand{\RNum}[1]{\uppercase\expandafter{\romannumeral #1\relax}}
\newcommand{\suchthat}{\;\ifnum\currentgrouptype=16 \middle\fi|\;}
\tikzstyle{int}=[draw, fill=blue!20, minimum size=2em]
\tikzstyle{init} = [pin edge={to-,thin,black}]
\title{Integrating Sub-6~GHz and Millimeter Wave to Combat Blockage: Delay-Optimal Scheduling}
{\author{\IEEEauthorblockN{Guidan Yao\IEEEauthorrefmark{1}, Morteza Hashemi\IEEEauthorrefmark{2}, and Ness B. Shroff \IEEEauthorrefmark{1}\IEEEauthorrefmark{3}}\IEEEauthorblockA{\IEEEauthorrefmark{1}  Department of Electrical and Computer Engineering, Ohio State University} 
\IEEEauthorblockA{\IEEEauthorrefmark{2}Department of Electrical Engineering and Computer Science, University of Kansas (KU)}
\IEEEauthorblockA{\IEEEauthorrefmark{3}Department of Computer Engineering and Computer Science, Ohio State University}}}
\begin{document}
\maketitle

\normalem

\begin{abstract}
Millimeter wave (mmWave) technologies have the potential to achieve very high data rates, but suffer from intermittent connectivity. In this paper, we provision an architecture to integrate
sub-6 GHz and mmWave technologies, where we
incorporate the sub-6 GHz interface as a fallback data transfer
mechanism to combat blockage and intermittent connectivity of the mmWave communications. To this end, we investigate
the problem of scheduling data packets across the mmWave and 
sub-6 GHz interfaces such that the average delay of system is
minimized. This problem can be formulated as Markov Decision Process. We first investigate
the problem of discounted delay minimization, and prove that the
optimal policy is of the threshold-type, i.e., \emph{data packets should
always be routed to the mmWave interface as long as the
number of packets in the system is smaller than a threshold}. Then, we show that the
results of the discounted delay problem hold for the average
delay problem as well. {Through numerical results, we demonstrate that under heavy traffic, integrating sub-6 GHz with mmWave can reduce the average delay by up to 70\%. Further, our scheduling policy substantially reduces the delay over the celebrated MaxWeight policy.}
\end{abstract}

\section{Introduction}
The annual amount of mobile data is projected to surpass 130 exabits by 2020 \cite{khan2011mmwave}. With such rapid increases in mobile data traffic, we are facing unprecedented challenges due to the shortage of wireless spectrum. To mitigate the problem of spectrum scarcity, the millimeter wave (mmWave) band, ranging from 30 GHz to 300 GHz,  provides a promising solution \cite{rappaport2013millimeter}. However, before mmWave communications can become a reality, there exist several significant challenges that need to be overcome.
In particular, mmWave channels can be highly variable with intermittent on-off periods. Due to small wavelengths in the mmWave band, most objects, such as concrete walls, a human body or even rain drops, may cause blocking and reflections as opposed to scattering and diffraction in the sub-6 GHz frequencies. In this case, blockage may completely break the mmWave link and result in an almost zero delivery rate \cite{sur2017wifi,niu2015blockage,genc2010robust}. In the provisioned applications of mmWave, human blockage is  one of the main challenges that can increase the path loss by more than 20 dB \cite{collonge2004influence,slezak2018empirical,sur201560,sato1998estimation}. 

To demonstrate the effect of human blockage on mmWave links, we have conducted a set of measurements with a stationary transmitter and a mobile receiver that moves away from the transmitter with the speed of 1 m/s. During the time intervals $200-300$ and $500-600$ ms, a human body blocks the line-of-sight (LOS) path between the transmitter and receiver. Figure~\ref{fig:system_setup} shows our basic experimental setup and Fig. \ref{blockageexperiment} depicts the strength of received signal at the mobile receiver over time \cite{hashemi2018out}. 
From the results, we see that the received signal strength falls to almost zero under blockage, which can be modeled as an OFF  or unavailable period. Therefore, the mmWave link exhibits an ON/OFF connectivity pattern under blockage scenarios such that during the OFF periods, delivery rate and delay performance can highly degrade.

\begin{figure}[t!]
\begin{center}
     \includegraphics[scale=.6,trim = 1.5cm 2.5cm 1.5cm 1.7cm, clip]{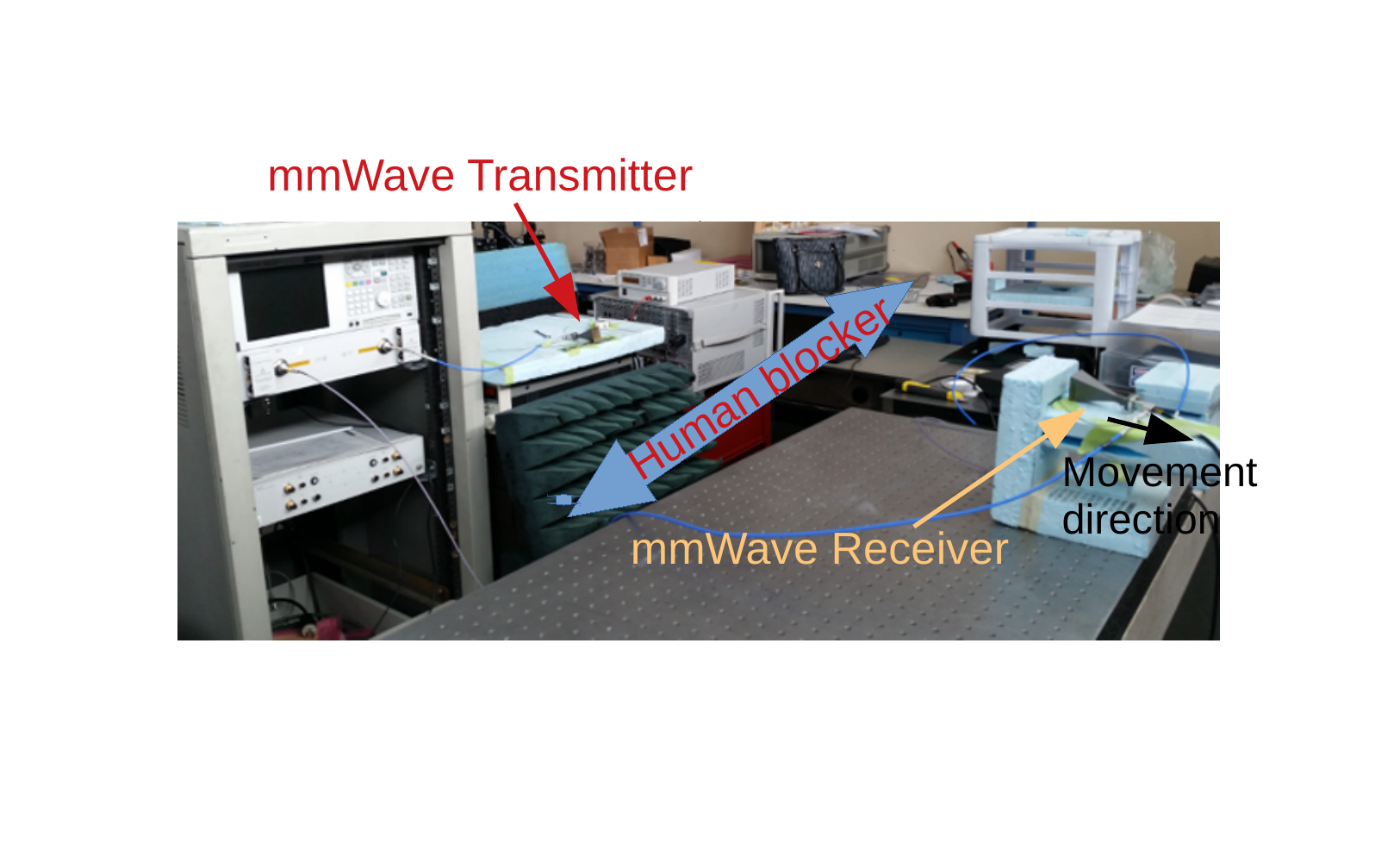} 
  \caption{Measurement setup and experiment scenario to investigate the effect of human blockage on mmWave channels. }
  \label{fig:system_setup}
  \end{center}
\end{figure}
\begin{figure}[t!]
\centering
\includegraphics[scale=.30]{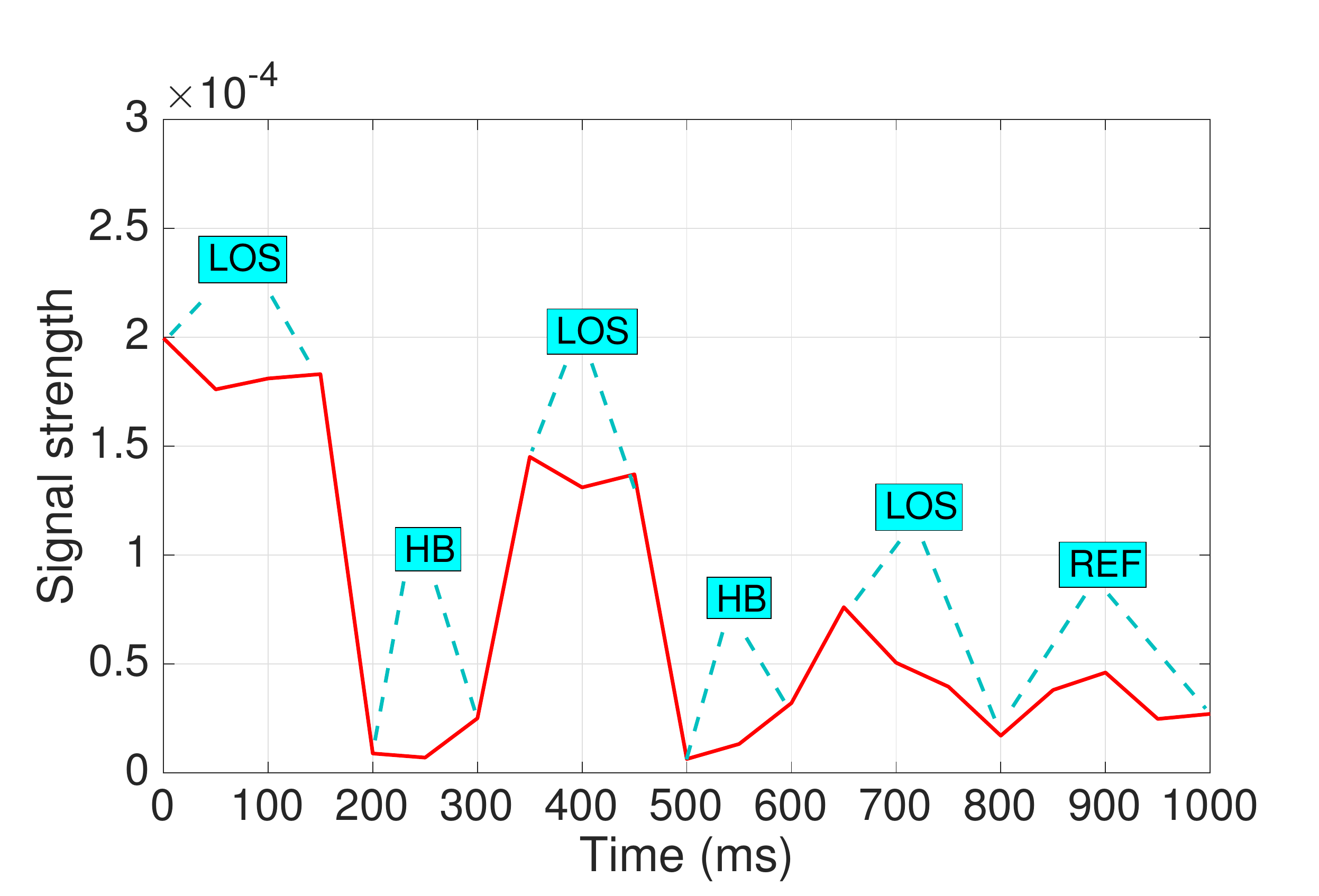}
\caption{Received mmWave signal strength under line of sight
(LOS), human blocker (HB), and reflection (REF) \cite{hashemi2018out}.}
\label{blockageexperiment}
\end{figure}

In order to mitigate the effects of intermittent connectivity, especially for delay-sensitive applications, several methods have been proposed. For instance, the authors in \cite{genc2010robust} and \cite{singh2007millimeter} exploit reflection paths and multi-hop paths to combat blockage.  
These methods are \emph{reactive} in the sense that the search for an alternative path is triggered after blockage occurs. However, since the link speed of the mmWave interface (multi-Gbps) is comparable to the speed at which a typical processor in a smart device operates, these methods may not be able to track and respond to channel variations in real-time. Therefore, it necessitates the use of a reasonably large buffer at the mmWave interface along with \emph{proactive solutions} to complement this design. In addition to the aforementioned methods, there exist several works on integrating the mmWave and sub-6 GHz technologies. For instance, due to spatial correlation, information of the sub-6 GHz channel can be extracted to reduce mmWave beamforming overhead \cite{nitsche2015steering,aliestimating}, while \cite{hashemi2018efficient} uses an online learning method to enhance beamforming delay.    
Moreover, the authors in \cite{hashemi2017energy,hashemi2018out,semiari2017joint} consider resource allocation and cooperative communication between the sub-6 GHz and mmWave to maximize either the throughput of the system or the quality-of-service per user application. 

Although an integrated mmWave/sub-6 GHz architecture has been previously proposed, the delay minimization problem in this integrated architecture has not been explored yet. In particular, 
we raise the following questions that: \emph{Would the system delay benefit from the sub-6 GHz interface? If so, when should the sub-6 interface be used so that the delay of the system is minimized?} 
In this paper, we exploit the sub-6 GHz interface as a fallback data transfer mechanism, and investigate the problem of delay-optimal scheduling across the sub-6 GHz and mmWave interfaces. We develop a proactive scheduling policy that is expressed in terms of the queue length of the mmWave and sub-6 GHz interfaces.

In order to obtain a delay-optimal policy, we first consider minimizing the \emph{expected total discounted delay}. We obtain three rules for the delay-optimal scheduling using value iteration in Markov Decision Process (MDP). 
Next, we collapse our system state space from four dimension to three dimension, and further \emph{demonstrate that the optimal policy for the  discounted delay problem is of a threshold-type.} Finally, we extend our results to the \emph{average delay problem}. Through simulations, we show that it is important to use the sub-6 GHz interface especially when the mmWave is unavailable with high probability and confirm that such a threshold-type policy improves the average delay performance while achieving similar throughput performance as the throughput-optimal and well-studied MaxWeight policy \cite{tassiulas1992stability}. 

In summary, our main contributions are as follows: (i) we formulate the discounted delay optimality problem in the integrated sub-6 GHz/mmWave architecture as an MDP and provide partial characteristics of the optimal policy. Based on the findings, we propose a threshold-type policy and then prove its optimality; (ii) we further show that the proposed policy is also optimal for the average delay problem; and (iii) we provide a methodology for solving the delay minimization problem in settings consisting of tandem and parallel queues with heterogeneous servers.

We use the following notations throughout the paper. Non-bold lowercase and uppercase letters are used for scalers and sets, respectively. Bold lowercase letters are used for vectors. In addition, $\mathds{E}[.]$ denotes the expectation operator. The sub-6 GHz and mmWave variables are denoted by $\left(\cdot\right)_\text{sub-6}$ and $\left(\cdot\right)_\text{mm}$, respectively. 

\section{Related Work}
\label{sec:related-work}
In order to mitigate the effects of blockers and intermittent mmWave links, there have been several works to provide reliable communication over the mmWave band.
In \cite{genc2010robust}, the authors utilized reflections from walls or reflectors to assist directional paths. 
However, the behavior of the reflectors depends on the relative placement of transmitter/receiver to reflectors (e.g., incident angle). In addition, there is an additional reflection loss incurred due to reflectors. In \cite{fonseca2006collection}, a spatial diversity technique was utilized to combat blockage  caused by human movement. The technique delivers the same packets through several propagation paths simultaneously instead of the strongest path. Although this method can mitigate the effect of blockage, it increases energy consumption. The authors in \cite{singh2009blockage} devised a multi-hop directional MAC protocol for mmWave indoor wireless personal area networks. The key idea is to go around the obstacle through multi-hop paths. 
Directional MAC protocol is also studied in \cite{niu2015blockage} where the authors  investigate a joint optimization over relay selection and spatial reuse so that network performance could be improved. This method is limited to two-hop relaying. 

In another related line of research, the \emph{slow-server} problem, in which the goal is to obtain a delay optimal scheduling policy in a queuing system with heterogeneous (i.e., fast and slow) servers, has been studied. In that scenario, the main question that has been answered is: \emph{should the slow server be laid aside or utilized occasionally?} The focus of this problem is the trade-off between waiting in queue and entering slow servers when fast servers are busy or unavailable. In this context, the mmWave link acts as the fast server that becomes unavailable if blockage occurs. The authors in \cite{larsen1983control} presented a M/M/2 queuing system with two heterogeneous servers and conjectured that the optimal policy for minimizing the average delay and expected total discounted delay in system is of the threshold-type. The conjecture was later confirmed in \cite{lin1984optimal}. Following this work, \cite{rykov2001monotone} extended the result to the system with multi-servers (i.e., more than two), and \cite{ozkan2014optimal} studied the delay minimization problem with different arrival and service processes. 
{ Our delay minimization problem differs from the aforementioned works in two key aspects: 
\begin{enumerate}
\item In our system architecture (see Fig. \ref{origin_model} and Fig. \ref{model}), packets that are scheduled to the mmWave link have to first go through a processing server for essential data processing. This makes our system a mix of the tandem and parallel queues. Further, the processing server and the mmWave queue constitute a tandem queue which is part of parallel queues. In this case, to prove optimality of the proposed threshold-type policy, we need to show the relationship between the resulting delays starting at states with the packet in processing server and the packet moved to the mmWave queue, where the two states cannot be collapsed. This makes the problem more complex than the traditional slow server problem.

\item Our mmWave interface includes both a server and a buffer, which together constitute the mmWave queue. In addition, we require packets in the mmWave queue to be impatient, meaning that packets can be reneged to sub-6 GHz for service. Note that the packets in the sub-6 GHz server cannot be sent back to the mmWave queue or the head buffer. Adding the flexibility of reneging to the mmWave packets introduces several new challenges such as: should packets be routed from the mmWave queue or the head buffer to the sub-6 GHz queue? Therefore, in addition to the trade-off between waiting in the head queue (entrance to the system) and entering the slow server which is investigated in the slow-server problem, our problem also considers the trade-off between waiting in the mmWave queue and entering the slow server and trade-off between dispatching packets from the head buffer and the mmWave queue.
\end{enumerate}
}

\section{Problem Setup}
\label{sec:system-model}
In this section, we present the system model and formulate the delay minimization problem. 
\subsection{System Model}
We consider an integrated communication architecture with dual sub-6 GHz and mmWave interfaces as shown in Fig. \ref{origin_model}. 
The infinite \emph{head buffer} is utilized to store all packets waiting to be processed and served by either mmWave or sub-6 GHz. The \emph{processing server} is responsible for essential data processing before scheduling. 
Plus, the system includes two servers (mmWave and sub-6 GHz servers) with extremely different service rate, i.e. the service rate of mmWave can be 100 times larger than the service rate of sub-6 GHz. 

\textbf{(i) Queue Models:} In our system model, we add a buffer to the mmWave server, which stores packets routed from the head buffer. The rationality of our design (i.e., a separate queue for the mmWave interface) is described next. The service rate of the mmWave server is comparable to the processing server (i.e., processor speed). Moreover, mmWave is very sensitive to blockage, which is hard to quickly predict.  
If we assume that there is no buffer for the mmWave server, then every packet needs to wait in the head queue util the mmWave server is available. In the case, the packet will experience service time of both the processing and the mmWave servers (almost double the service time of the mmWave) except waiting time in the head buffer. Then, the performance of mmWave is degraded by approximately half. On the contrary, if the mmWave server has its own buffer for processed packets, part of waiting time in the head buffer can be utilized to process packets in advance, which reduces the experienced service time mentioned above. 
However, the sub-6 GHz link is much slower than the  processing server. Therefore, processing delay can be ignored compared to service time of the sub-6 GHz. In other words, it is not necessary for the sub-6 GHz server to have its own buffer considering the cost of buffer.
Thus, it is appropriate to assume that the \emph{sub-6 GHz interface} acts as a server with a buffer size of one, while the \emph{mmWave interface} consists of an infinite buffer and a server. 
\begin{figure}[t]
 \centering
 \includegraphics[scale=.35]{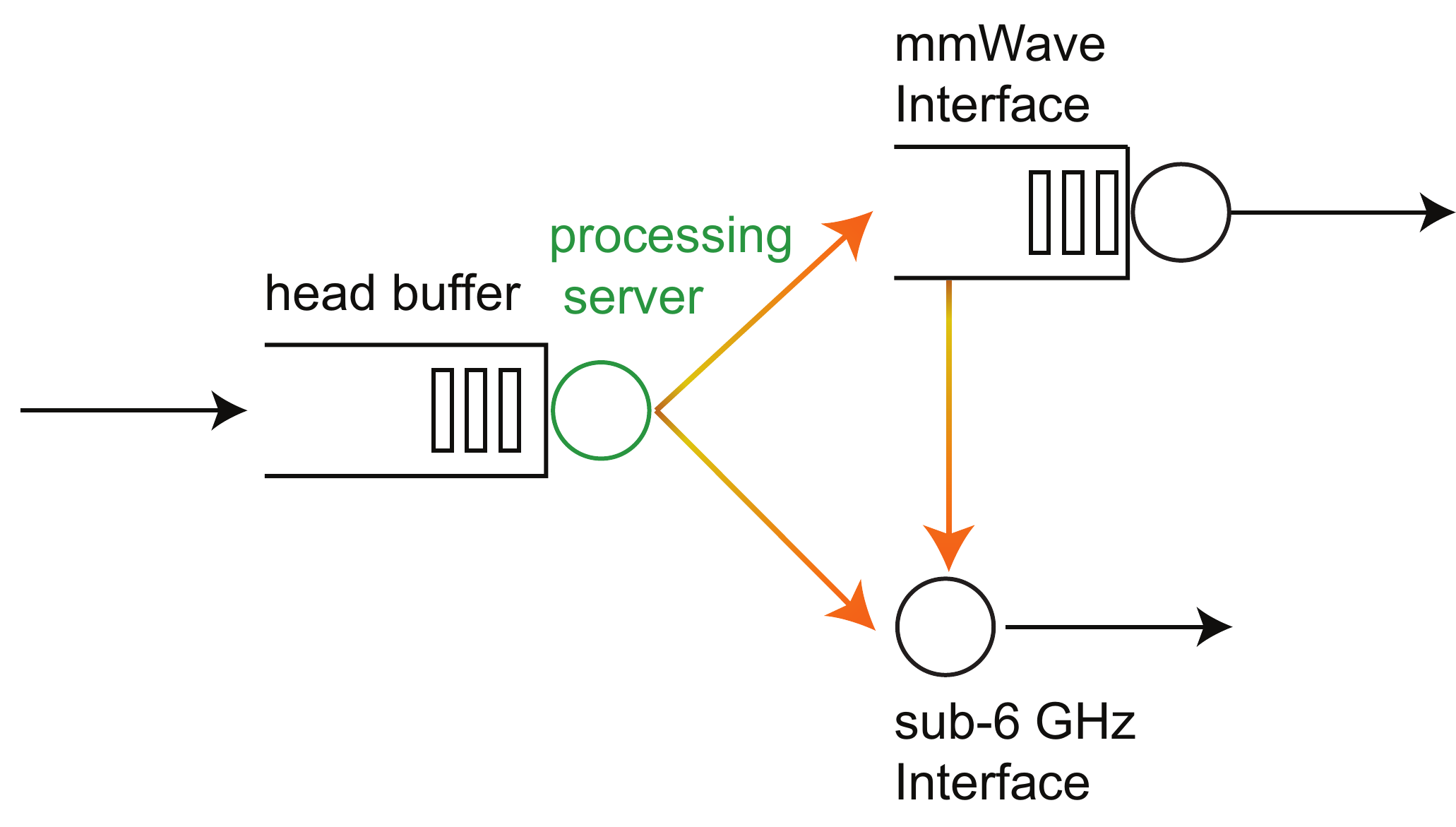}
 \caption{Integrated sub-6 GHz and mmWave architecture.}
 \label{origin_model}
  \vspace{-.3cm}
 \end{figure}

\textbf{(ii) Two-state mmWave link; Available or Unavailable:} As mentioned before, the mmWave link is highly variable with intermittent ON-OFF periods. 
It is reasonable to model the mmWave service rate with two states, say \emph{available} and \emph{unavailable}. For the unavailable state, the mmWave channel is almost disconnected and thus we assume that the service rate of the mmWave is 0. For the available state, we assume that the service time is exponentially distributed with parameter $\mu_\text{mm}$. Further, we denote the probability of available and unavailable states with $p_\text{a}$ and $p_\text{na}$, respectively. 

We further assume that arrivals to the system form a Poisson process with parameter $\lambda$, and that service times of the processing server and the sub-6 GHz interface are exponentially distributed with parameter {$\mu_\text{p}$} and $\mu_\text{sub-6}$, respectively. 
Given that the mmWave service rate is of the same order as the clock speed of the processor (i.e., several GHz), we assume that $\mu_\text{p}$ is much faster than $\mu_\text{sub-6}$ but in the same order as $\mu_\text{mm}$.  
Since delay of the processing server becomes negligible compared with the sub-6 GHz interface, we consider the equivalent model depicted in Fig. \ref{model} where we call the processing server and mmWave interface as \emph{mmWave line.} 

Within this content, we further clarify the difference of our problem from previous work, which has been briefly discussed in Section \ref{sec:related-work}. In Fig. 3, packets that are scheduled to the mmWave line have to go through a processing server first. These make our system a mix of the tandem and parallel queues. 
In the case, to finally obtain the optimality of the proposed threshold-type policy,
 we need to show the relationship between the resulting delays starting at states with the packet in processing server and the packet moved to the mmWave queue, where the two states cannot be collapsed at this step. This implies that our problem is more complex than the classic slow server problem.

\begin{figure}[t]
\centering
	\includegraphics[scale=.35]{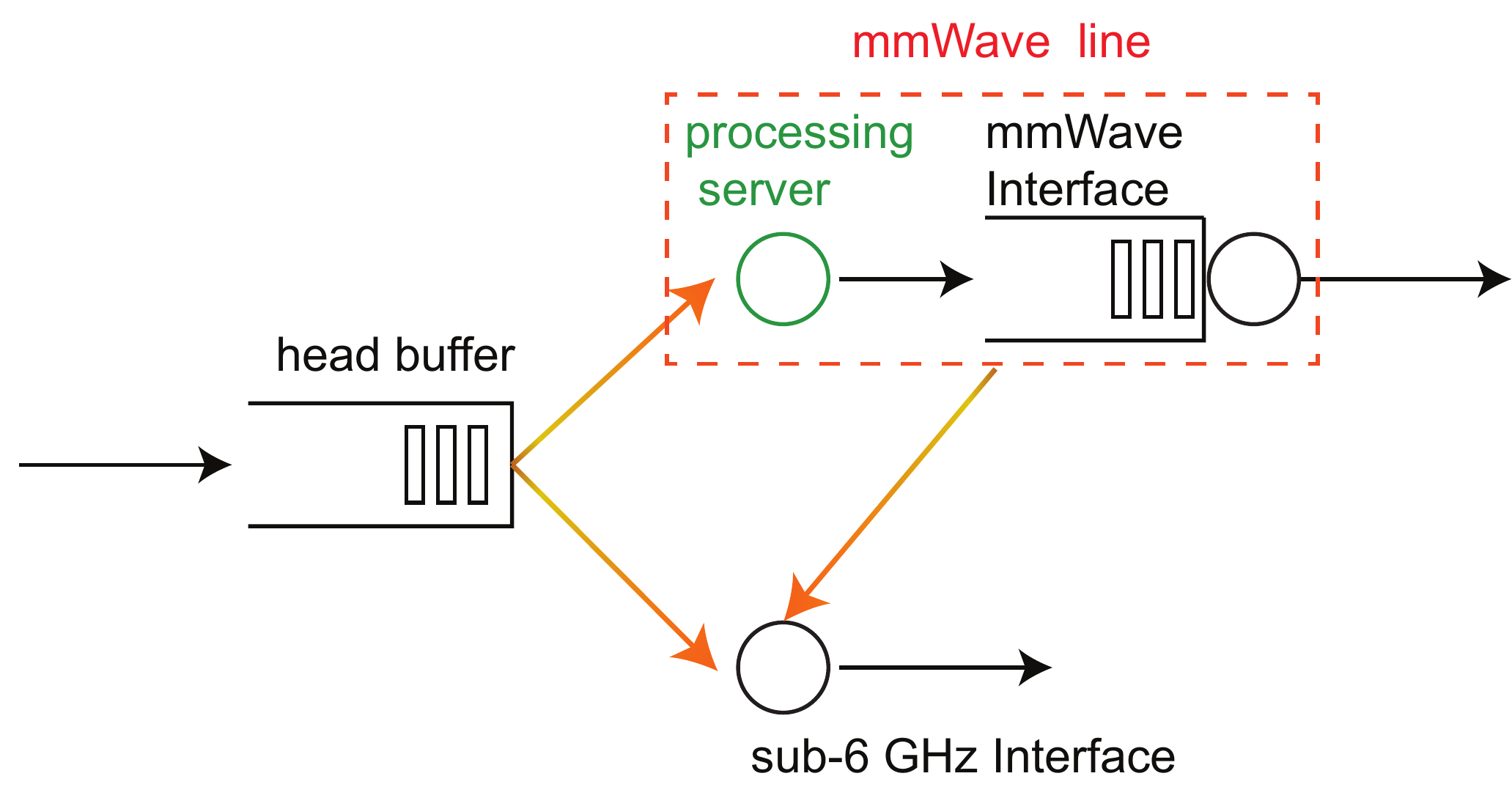}
	\caption{Equivalent system model.}
	\label{model}
	\vspace{-.4cm}
\end{figure}
As mentioned in Section \ref{sec:related-work}, to avoid a large waiting time in the mmWave queue due to intermittent channel (e.g., due to blockage), we require the packets to be \emph{impatient} in the sense that if the waiting time of the head-of-line packet in the mmWave queue becomes large, the packet ``reneges'' (is moved to) from the mmWave line or ``routes'' from the head buffer (is dispatched to) to the sub-6 GHz interface. 
Note that the packet in the sub-6 GHz server cannot be sent back to the mmWave line or the head buffer. Adding the reneging concept introduces new challenges such as : \emph{should the packets be moved from the head buffer or the mmWave queue to the sub-6 GHz server?} Therefore, in addition to the trade-off between waiting in the head queue and entering the slow server which is investigated in the slow-server problem, our problem investigates: (i) the trade-off between waiting in the mmWave line and entering the slow server, and (ii) trade-off between dispatching packets from the head buffer and the mmWave line.

\subsection{System Dynamics}
\label{notations}
\textbf{(i) System States:} 
Let $q_0$, $q_1$ $\in \mathbb{N}$ denote the queue length of the head buffer and mmWave interface, respectively. Moreover, $l_1$, $l_2$ $\in \{0,1\}$ denote the busy/idle condition of the processing server and sub-6 GHz interface, respectively. In this case, $l_1=1$ implies a busy server. Therefore, the system state can be expressed by a four-dimensional vector $\mathbf{q} \triangleq \left(q_0, l_1, q_1, l_2\right)$ with the state space of $Q\triangleq \mathbb{N}\times\{0,1\} \times\mathbb{N}\times\{0,1\}$.  

\textbf{(ii) Events:}
There are four different events that happen in the system, which are defined as follows:

\noindent \emph{(1) Arrival of a packet to the head buffer:} After arrival of one packet, state of the system is changed as follows:
\begin{align}
	\mathcal{A}_0\left(\mathbf{q}\right)
	\triangleq\left(q_0+1, l_1, q_1, l_2\right).\notag
\end{align}
 
\noindent \emph{(2) Departure of a packet from the mmWave interface:} The departure of a packet from the mmWave queue changes the system state as 
\begin{align}
	\mathcal{D}_1\left(\mathbf{q}\right)
	\triangleq\left(q_0, l_1, \left(q_1-1\right)^{+}, l_2\right),\notag
\end{align}
where $\left(\cdot\right)^+=\max\left(\cdot,0\right).$ 

\noindent \emph{(3) Departure of a packet from the sub-6 GHz interface:} If a packet departs from the sub-6 GHz queue, state of the system changes as
\begin{align}
	\mathcal{D}_2\left(\mathbf{q}\right)	\triangleq\left(q_0, l_1, q_1, \left(l_2-1\right)^{+}\right).\notag
\end{align} 

\noindent \emph{(4) Processing completion:} 
If the processing server delivers a packet to the mmWave queue, the system state changes as
\begin{align}
	\mathcal{T}\left(\mathbf{q}\right)
	\triangleq\left(q_0,\left(l_1-1\right)^{+}, l_1+q_1, l_2\right).\notag
\end{align} 
Note that we introduce ``dummy'' packets for the last three events when $q_1=0$, $l_2=0$ and $l_1=0$, respectively. This is further elaborated in Section \ref{formulation}.  

\textbf{(iii) Actions:} $K=\{A_h, A_1, A_2, A_b, A_r\}$ is an action set. $K_{\mathbf{q}} \subseteq K$ denotes the set of admissible actions in state $\mathbf{q}$. Each action in set $K$ is defined as follows: 

\noindent \emph{(1) Holding:} Action $A_h$ keeps the system state unchanged, and is defined on $Q$. Therefore, we have 
$$A_h\left(\mathbf{q}\right)\triangleq\left(q_0, l_1, q_1, l_2\right).$$

\noindent \emph{(2) Scheduling-on-mmWave:} 
A packet can be routed to the mmWave line if the processing server is idle, i.e., $$A_\text{1}\left(\mathbf{q}\right)\triangleq\left(q_0-1, 1, q_1, l_2\right),$$ 
which is defined on the set $\{\mathbf{q}\mid q_0\geq 1, l_1=0\}$. 

\noindent \emph{(3) Scheduling-on-sub-6:}  
A packet can be routed to the sub-6 GHz interface if the sub-6 GHz server is idle, i.e., 
$$A_2\left(\mathbf{q}\right)\triangleq\left(q_0-1, l_1, q_1, 1\right),$$ 
which is defined on the set $\{\mathbf{q}\mid q_0\geq 1, l_2=0\}$.  

\noindent \emph{(4) Scheduling-on-both:}
Action $A_b$ dispatches two packets to the sub-6 GHz and processing servers simultaneously, i.e., $$A_b\left(\mathbf{q}\right)\triangleq\left(q_0-2, 1, q_1, 1\right),$$ which is defined on the set $\{\mathbf{q} \mid q_0\geq 2, l_1=l_2=0\}$. 

\noindent \emph{(5) Reneging:} Action $A_r$ moves a packet from the mmWave line to the sub-6 GHz interface, and it is defined on the set $\left\{\mathbf{q}\mid q_1+l_1\geq 1, l_2=0 \right\}$. 
 Let $A_{r_\text{p}}$ and $A_{r_\text{mm}}$ denote the reneging actions from the processing server and mmWave interface, respectively. Therefore, we have
\begin{align}
\begin{array}{l l}
A_{r_\text{p}}\left(\mathbf{q}\right)\triangleq\left(q_0, 0, q_1, 1\right), &\mathbf{q}\in \{\mathbf{q}\mid l_1=1,\, l_2=0\}	;\\
A_{r_\text{mm}}\left(\mathbf{q}\right)\triangleq\left(q_0, l_1, q_1-1, 1\right),&\mathbf{q}\in \{\mathbf{q}\mid q_1\geq1,\, l_2=0\}.	
\end{array}\notag
\end{align} 
Then, the reneging action $A_r$ is expressed as 
\begin{align}
& A_r\left(\mathbf{q}\right)\triangleq 
 \left\{
\begin{array}{l l}
A_{r_\text{p}}\left(\mathbf{q}\right) &\text{if}\,\,l_1=1,\,\,q_1=0\\
A_{r_\text{mm}}\left(\mathbf{q}\right)&\text{if}\,\,l_1=0,\,\,q_1\geq1\\
\mathop {\argmin }\limits_{A_a\in \{A_{r_\text{p}}, A_{r_\text{mm}}\}} v\left(A_a\left(\mathbf{q}\right)\right)&\text{otherwise}
\end{array} \right. \notag	
\end{align}
where $v\left(\cdot\right)$ denotes the delay cost. Note that if $A_{r_\text{p}}$ and $A_{r_\text{mm}}$ are admissible, we select an action that results in a smaller cost. In Section \ref{sec:optimal-discounted-delay}, we show that $A_r=A_{r_\text{p}}$ for the discounted delay problem when both $A_{r_\text{p}}$ and $A_{r_\text{mm}}$ are admissible. 

\subsection{Problem Formulation}
\label{formulation}
\textbf{Average Delay Problem:}  Our objective is to schedule packets across the mmWave and sub-6 GHz interfaces such that the average delay of system is minimized. To this end, we know that, by Little's Law, the average delay minimization problem is equivalent to minimizing the average total number of packets in the system, which is expressed as follows:
\begin{equation}
\min_{\pi \in \Pi}\,\,\limsup_{T \to \infty}\frac{1}{T}\mathbb{E}^\pi\left[\int_{t=1}^{T}\left(\mathbf{q}[t]\cdot\mathbf{e}\right) dt\right],
\end{equation}
where $\mathbb{E}^\pi$ denotes the conditional expectation given policy $\pi$, $\mathbf{q}[t]\in Q$ is the system state at time $t$, $\Pi$ denotes the set of all admissible policies, and $\mathbf{e} = \left(1, 1, 1, 1\right)^{\text{T}}$. We model the system evolution as an MDP, and for simplicity, we convert the continuous-time MDP problem into an equivalent discrete-time MDP problem with the method of uniformization \cite{puterman2014markov}. In particular, we assume that all servers will serve ``dummy'' packets whenever they are idle. Then, we separate continuous time into time slots with sequences when either a packet arrival or a packet (real or dummy) departure from the processing server or interfaces happens. 
Let $N=\{1,2,3,\cdots\}$ denote the set of time slots such that the channel state does not change during each time slot. Then, the system state at the $n$-th time slot is expressed as $\mathbf{q}[n]$.
Furthermore, without loss of generality, we scale time and assume that $\lambda+\mu_\text{p}+p_\text{a}\mu_\text{mm}+\mu_\text{sub-6}=1$.

We consider the set of control variables $U\triangleq \{(u_0, u_1, u_2,\\ u_3)\mid u_0, u_1, u_2, u_3\in K\}$. 
Then, the decision rule at the $n$-th decision epoch (the beginning of the $n$-th time slot) is a mapping from the system states to the control variables, i.e., $d_n: Q \rightarrow U$, for all $ n\in N$ and the policy $\pi$ is a sequence of the decision rules, i.e., $\pi = \left(d_1, d_2, \cdots\right)$.
Further, if $\mathbf{q}\left[n\right]=\mathbf{q'}$ and $d_n\left(\mathbf{q'}\right)=\left(u_0, u_1, u_2, u_3\right)$ for certain $n \in N$, 
then if an arrival occurs at the $(n+1)$-th epoch, we would take actions according to $u_0$. Similar explanation applies to $u_1$, $u_2$, $u_3$. Thus, the transition probabilities in the discrete-time MDP are expressed as
\begin{align}
	\mathbb{P}\left(\mathbf{q}^{\prime}\mid \mathbf{q},\mathbf{u}\right)=
	\begin{cases}
		\lambda & \text{if}\,\,\, \mathbf{q}^{\prime}=u_0\left(\mathcal{A}_0\left(\mathbf{q}\right)\right)\\
		\mu_\text{p}& \text{if}\,\,\, \mathbf{q}^{\prime}=u_1\left(\mathcal{T}\left(\mathbf{q}\right)\right)\\
		p_\text{a}\mu_\text{mm} & \text{if}\,\,\, \mathbf{q}^{\prime}=u_2\left(\mathcal{D}_1\left(\mathbf{q}\right)\right)\\
		\mu_\text{sub-6} & \text{if}\,\,\, \mathbf{q}^{\prime}=u_3\left(\mathcal{D}_2\left(\mathbf{q}\right)\right)\\
	\end{cases}\notag
	\vspace{-0.2cm}
\end{align}
Then, with the discrete-time MDP, the uniformized problem is formulated as follows:
\begin{equation}
\vspace{-0.05cm}
\min_{\pi \in \Pi}\,\,\limsup_{N \to \infty}\frac{1}{N}\mathbb{E}^\pi\left[\sum_{n=1}^{N}\mathbf{q}[n]\cdot\mathbf{e}\right].
\label{eq:avg-delay}
\vspace{-0.05cm}
\end{equation}

\textbf{Discounted Delay Problem:}
To solve the average delay problem, we first consider the problem of minimizing the expected total discounted delay of the system (discounted delay problem) to avoid convergence issues in the presence of bounded value function \cite{puterman2014markov}. Next, we extend our results to the average delay problem. The discounted delay problem in the equivalent discrete-time MDP is expressed as 
\begin{equation}
\min_{\pi \in \Pi}\,\,\,\,\, \mathbb{E}^\pi \left[ \sum_{n=1}^{\infty}\beta^{n-1}\mathbf{q}[n]\cdot\mathbf{e}\right],
\end{equation}
where $\beta$ is a discount factor such that $0\leq\beta<1$. To solve the discounted delay problem, it is known that there exists an optimal deterministic stationary policy \cite{puterman2014markov}. Thus, we only need to consider the class of deterministic stationary policies.  
We apply the value iteration method to find the optimal policy.

Under the assumption that the system is stable, value (delay) functions of the initial state $\mathbf{q}\in Q$ are bounded real-valued functions. Let $V$ denote the Banach space of bounded real-valued functions on $Q$ with supremum norm. Define operator $\mathcal{L}: V \rightarrow V$ as
\begin{IEEEeqnarray}{rCl}
  \IEEEeqnarraymulticol{3}{l}{
   \left(\mathcal{L}v\right)\left(\mathbf{q}\right)
  }\nonumber\\\quad
  & \triangleq & \mathbf{q}\cdot\mathbf{e}+\beta\min_{\mathbf{u}\in U_{\mathbf{q}}} \bigg\{\lambda  v\big(u_0\left(\mathcal{A}_0\left(\mathbf{q}\right)\right)\big)
	 +\mu_\text{p}v\big(u_1\left(\mathcal{T}\left(\mathbf{q}\right)\right)\big)\notag\\
  && +\mu_\text{mm}p_\text{a} v\big(u_2\left(\mathcal{D}_1\left(\mathbf{q}\right)\right)\big)+\mu_\text{sub-6} v\big(u_3\left(\mathcal{D}_2\left(\mathbf{q}\right)\right)\big)\bigg\},
	\label{eq:optimality-condition}
\end{IEEEeqnarray}
where $v\left(\cdot\right)\in V$ and $U_\mathbf{q}$ denotes the set of admissible control variables in state $\mathbf{q}$ such that $U_\mathbf{q}\subseteq U$. 
Let $J_\beta\left(\mathbf{q}\right)$ denote optimal expected total discounted delay function of initial state $\mathbf{q}$. Then, $J_\beta\left(\mathbf{q}\right)$ is a solution of Bellman function, i.e., $J_\beta\left(\mathbf{q}\right)=\mathcal{L}J_\beta\left(\mathbf{q}\right)$. 
\section{Delay Optimal Policy}
\label{sec:optimal-discounted-delay}
\subsection{Discounted Delay Problem}
 Except that the mmWave channel is extremely intermittent, the average service rate of the mmWave is much higher than the sub-6 GHz (e.g., two orders of magnitude). Besides, the service rate of the mmWave and processing server are in the same order. Hence, it is reasonable to assume that the expected time for a packet to go through empty mmWave line is less than empty sub-6 GHz interface, i.e., $\frac{1}{p_\text{a} \mu_\text{mm}}+\frac{1}{\mu_\text{p}}<\frac{1}{\mu_\text{sub-6}}$. 
With this assumption, we have the following theorem: 
\begin{theorem}
\label{thm:optimality}
Assuming that \emph{$\frac{1}{p_\text{a} \mu_\text{mm}}+\frac{1}{\mu_\text{p}}<\frac{1}{\mu_\text{sub-6}}$}, then we have
\emph{ 
\begin{equation}
\begin{array}{lcl}
\text{(a)} \  J_\beta(A_1(\mathbf{q}))\leq J_\beta(A_{h}(\mathbf{q})) & & \text{if}\,\,q_0\geq 1, \,\,l_1=0;\notag\\
\text{(b)}  \ J_\beta(A_{2}(\mathbf{q}))\leq J_\beta(A_r(\mathbf{q})) & &\text{if}\,\,q_0\geq 1, \,\,l_1+q_1\geq1,	\notag\\
& & \,\,\,\,\,\,\text{and} \,\,l_2=0;\notag\\
\text{(c)}\ J_\beta(\mathcal{T}(\mathbf{q}))\leq J_\beta(\mathbf{q}) & &\text{if} \,\,l_1=1;\notag\\ 
\text{(d)}  \ J_\beta(A_1(\mathbf{q}))\leq J_\beta(A_{2}(\mathbf{q})) & & \text{if}\,\,\mathbf{q}=(q_0,0,0,0)\notag\\
& &  \,\,\,\,\,\,\text{and}\,\, q_0\geq 1 ;\notag\\ 
\text{(e)} \ J_\beta(\mathbf{x})\leq J_\beta(\mathbf{y}) & & \text{if}\,\,\Vert \mathbf{x} \Vert_1 \leq \Vert \mathbf{y} \Vert_1, \,\,\mathbf{x}, \mathbf{y} \in Q.  
\end{array}  	
\end{equation}}
\end{theorem}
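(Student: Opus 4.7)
The plan is to prove all five inequalities simultaneously by value iteration on the discrete-time MDP. Set $v_0\equiv 0$ and $v_{n+1}\triangleq \mathcal{L}v_n$. Since $\mathcal{L}$ is a $\beta$-contraction on $V$, the iterates $v_n$ converge pointwise to $J_\beta$ on $Q$, so any pointwise inequality that holds for every $v_n$ transfers to $J_\beta$. I would induct on $n$: the base case $v_0\equiv 0$ makes all five inequalities trivial, and the inductive step assumes (a)--(e) hold for $v_n$ and establishes each of them for $v_{n+1}$.

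For the induction step, each inequality compares $v_{n+1}$ at two states. I would expand both sides using the operator in (\ref{eq:optimality-condition}), handling the one-step cost $\mathbf{q}\cdot\mathbf{e}$ directly and bounding the future-cost minimum on the larger-value side by an admissible control that mimics the minimizing control on the smaller-value side. This reduces each inequality to event-by-event comparisons $v_n(\cdot)\leq v_n(\cdot)$ that are dispatched by one of the inductive hypotheses (a)--(e), together with commuting identities such as $\mathcal{A}_0\circ A_1=A_1\circ\mathcal{A}_0$ and $\mathcal{D}_2\circ A_1=A_1\circ\mathcal{D}_2$. For (c) I would use that $\mathcal{T}$ preserves the 1-norm when $l_1=1$, so the one-step costs match, and that the processing event at $\mathbf{q}$ lands in $\mathcal{T}(\mathbf{q})$ while at $\mathcal{T}(\mathbf{q})$ it is a dummy event leaving the state fixed; the remaining three event comparisons then fall to (c) and (e). Property (e) is closed by an analogous coupling that invokes (c) to move packets between ``earlier'' and ``later'' stages of the mmWave line while maintaining queue-length ordering. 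Properties (a) and (d) are reductions by mimicking that lean on (c) and (e), with the rate assumption $1/(p_\text{a}\mu_\text{mm})+1/\mu_\text{p}<1/\mu_\text{sub-6}$ entering (d) to quantify the advantage of routing a packet from the empty system into the mmWave line rather than the sub-6 GHz server.

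The main obstacle is property (b), since $A_r$ is itself defined by an argmin between $A_{r_\text{p}}$ and $A_{r_\text{mm}}$ and the comparison therefore splits into sub-cases according to which reneging branch is active. In each sub-case the post-reneging state differs from $A_2(\mathbf{q})$ only by the location of a single packet (head buffer versus processing server or mmWave queue), so the bulk of the work reduces to comparing $v_n$ at these paired states using (c) to slide a packet from the processing server into the mmWave queue and (e) to absorb residual single-packet differences; the rate assumption supplies the decisive algebraic slack after the event-by-event decomposition. Closing this case analysis jointly with the other four properties inside a single inductive loop, while keeping the mimicked controls admissible on both sides of every inequality, is where I expect most of the effort to lie. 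A useful intermediate observation to isolate on the way would be that, when both reneging branches are admissible, the branch $A_{r_\text{p}}$ dominates $A_{r_\text{mm}}$ under the operator $\mathcal{L}$, which follows from (c) applied to $v_n$ and then makes the sub-case analysis in (b) collapse to a single uniform comparison.
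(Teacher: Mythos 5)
Your proposal follows essentially the same route as the paper's Appendix~A: start the value iteration from the zero function, show that $\mathcal{L}$ propagates all five properties jointly by mimicking the minimizing control of the smaller-value state with an admissible control at the larger-value state event by event, and pass to the limit; your observation that property (c) forces $A_r=A_{r_\text{p}}$ when both reneging branches are admissible is exactly the paper's resolution of the case split in (b). The only minor deviation is that the paper needs the rate assumption $\frac{1}{p_\text{a}\mu_\text{mm}}+\frac{1}{\mu_\text{p}}<\frac{1}{\mu_\text{sub-6}}$ only for property (d), whereas you also invoke it in (b), which the paper handles purely by monotonicity and mimicking.
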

\begin{proof}
Proof is provided in Appendix A. 
\end{proof}

\noindent\emph{Remark:} Note that in the following, if action $A_{x}\in K$ has a higher priority than action $A_{y}\in K$, it means that action $A_{x}$ incurs no more costs than action $A_{y}$, where $x,y\in \{1,2,r,b,h\}$. 

From Theorem \ref{thm:optimality}, we obtain three rules that provides partial characteristics of the optimal policy:
\begin{basedescript}{\desclabelstyle{\pushlabel}\desclabelwidth{3.1em}}
\item[\emph{Rule 1.}] \emph{Holding is not preferable as long as the processing server is idle:} Property (a) implies that action $A_1$ has priority over action $A_h$. 
\item[\emph{Rule 2.}] \emph{Keeping the mmWave line busy:} Properties (a) and (d) imply that a packet should be scheduled on the mmWave line whenever the mmWave line is empty and the head buffer  (see Fig. \ref{model}) is not empty.
\item[\emph{Rule 3.}] \emph{Head buffer is the first choice for the sub-6 GHz interface:} 
By property (b), action $A_2$ has priority over action $A_r$. 
In addition, $J_\beta(\mathcal{T}(\mathbf{q}))=J_\beta(A_{r_p}(\mathbf{q}'))$ and $J_\beta(\mathbf{q})=J_\beta(A_{r_{mm}}(\mathbf{q}'))$, where $\mathbf{q}=(q_0,1,q_1,1)$ and $\mathbf{q}'=(q_0,1,q_1+1,0)$. Then, property (c) implies that $A_r(\mathbf{q}')=A_{r_\text{p}}(\mathbf{q}')$ for $A_{r_\text{p}}$, $A_{r_\text{mm}} \in K_{\mathbf{q}'}$.\end{basedescript}
 

\textbf{Optimal Policy:} Based on these rules, we show that optimal policy for the discounted delay problem is of the threshold-type, and is defined as follows:
\begin{align}
& D_{m}\left(\mathbf{q}\right)=\notag\\
& 
  \begin{cases}
  A_1\left(\mathbf{q}\right) & \text{if} \,\, \mathbf{q}=\left(q_0,0,q_1,1\right),\,q_0\geq 1,\notag\\
  &\ \ \, \text{or}\, \, \mathbf{q}=\left(q_0,0,q_1,0\right),\,q_0\geq 1,\,\, q_0+q_1\leq m, \\     
  A_2\left(\mathbf{q}\right) & \text{if} \,\, \mathbf{q}=\left(q_0,1,q_1,0\right),\, q_0\geq 1,\,\, q_0+q_1+1>m,\notag\\
  & \ \ \, \text{or}\,\, \mathbf{q}=
  \left(1,0,q_1,0\right),\,q_1\geq m, \\  
  A_{r}\left(\mathbf{q}\right) & \text{if}\, \, \mathbf{q}=\left(0,l_1,q_1,0\right),\, l_1+q_1>m,  \\  
  A_{b}\left(\mathbf{q}\right) & \text{if}\, \, \mathbf{q}=\left(q_0,0,q_1,0\right),\, q_0+q_1>m, q_0\geq 2, \\  
  A_{h}\left(\mathbf{q}\right) & \text{otherwise}, 
  \end{cases} 
\end{align}
where $D_{m}$ is a threshold policy with threshold $m$ such that $D_m$ follows all above rules. Then, for each $n \in N$, the decision rule at time slot $n$ is given by
$d_n\left(\mathbf{q}\right)=\left(D_m\left(\mathcal{A}_0\left(\mathbf{q}\right)\right),D_m\left(\mathcal{T}\left(\mathbf{q}\right)\right),D_m\left(\mathcal{D}_1\left(\mathbf{q}\right)\right),D_m\left(\mathcal{D}_2\left(\mathbf{q}\right)\right)\right)$. 

 To prove the optimality of $D_{m}$ for the discounted delay problem, we name the action sets $\{A_1,A_h\}$ and $\{A_2,A_r\}$ as \emph{``not-adding-to-sub-6''} and exclusively \emph{``adding-to-sub-6''}, respectively. We already know the priority between $A_1$ and $A_h$ and the priority between $A_2$ and $A_r$. Thus, it only remains to determine the priority between the sets not-adding-to-sub-6 and adding-to-sub-6. To show this, we dub the path consisting of the head buffer, the processing server, and the mmWave queue as ``\emph{FastLane}''. We claim that in the discounted delay optimal policy, adding-to-sub-6 obtains priority over not-adding-to-sub-6 when the queue length of FastLane exceeds certain threshold $m$, i.e., a threshold-type policy as expressed by $D_m$. Next, we show this via value iteration. For simplicity, we re-express the system state $\mathbf{q}$ in the form of $\left(x,q_1,l_2\right)$ where $x$ denotes the number of packets in the head buffer and processing server. Note that if $x>0$, then the processing server should be busy by Rule 1.
For the sake of exposition in the following proof, we define two terms in Definition \ref{intermediate value}.
\begin{definition}
\label{intermediate value}
Let $J_\beta^n\left(x,q_1,l_2\right)$ denote the optimal expected total discounted delay over the next $n$ time slots with initial state $\left(x,q_1,l_2\right)$. Then, we define an intermediate value $T_\beta^n\left(x,q_1,l_2\right)$ as:\emph{
\begin{align}
& T_\beta^n\left(x,q_1,l_2\right)=\notag\\
&\left\{
\begin{array}{l l}
	J_\beta^n\left(x,q_1,l_2\right) &\text{\emph{if}}\,\,\mathbf{q}=\mathbf{0}\,\,\text{\emph{or}}\,\, l_2=1\\ 
	\min\{J_\beta^n\left(x,q_1,0\right),J_\beta^n\left(x-1,q_1,1\right)\}&\text{\emph{if}}\,\,x\geq 1, l_2=0\\
	\min\{J_\beta^n\left(0,q_1,0\right),J_\beta^n\left(0,q_1-1,1\right)\}&\text{\emph{otherwise}}
 \end{array}\right.\notag
\end{align}	}
\noindent As a result, $J_\beta^{n+1}\left(x,q_1,l_2\right)$ is written as:
\emph{\begin{align}
	 &J_\beta^{n+1}\left(x,q_1,l_2\right)\notag
	 =  \left(x+q_1+l_2\right)+
	 \beta\lambda T^n_\beta\left(x+1,q_1,l_2\right)\notag\\
	 &+\beta\mu_\text{mm} p_\text{a} T^n_\beta\left(x,\left(q_1-1\right)^+,l_2\right)+\beta\mu_\text{sub-6} T^n_\beta\left(x,q_1,0\right)\notag\\
	 	 &+\beta\mu_\text{p} T^n_\beta\left(\left(x-1\right)^+,x+q_1-\left(x-1\right)^+,l_2\right).
	 	 	\label{valueiteration}
\end{align}}
Moreover, $J_\beta^{0}\left(x,q_1,l_2\right)=x+q_1+l_2$.
\end{definition}
Next, we define a class of functions with threshold, supermodular and monotonicity properties in Definition \ref{F_class} and Lemma \ref{thm:F_class} proves that $J_\beta^n$ has these properties.

\begin{definition}
\label{F_class}
Let $\mathscr{F}$ be a class of functions such that for each function \emph{$f : \mathbb{N}\times \mathbb{N}\times \{0,1\} \to \mathbb{R}_{\geq 0}$} in \emph{$\mathscr{F}$}, we have 
\emph{\begin{align}
	& f\left(x+1,q_1,0\right)+f\left(x+1,q_1,1\right)\notag\\
	 &\ \ \ \ \ \ \ \,  \ \ \ \ \ \  \ \ \ \ \  \leq f\left(x,q_1,1\right)+f\left(x+2,q_1,0\right)\label{A2_x}\\
	& f\left(x+1,q_1,0\right)+f\left(x,q_1+1,1\right)\notag\\
	& \ \ \ \ \ \ \ \,  \ \ \ \ \ \  \ \ \ \ \  \leq f\left(x,q_1,1\right)+f\left(x+1,q_1+1,0\right)\label{A2_q1}\\
	& f\left(0,q_1+1,0\right)+f\left(0,q_1+1,1\right)\notag\\
	&\ \ \ \ \ \ \ \,  \ \ \ \ \ \  \ \ \ \ \ \leq f\left(0,q_1,1\right)+f\left(0,q_1+2,0\right)\label{Ar}\\
	& f\left(x,q_1+1,l_2\right)\leq f\left(x+1,q_1,l_2\right)\label{switch}	\end{align}}
together with supermodularity:
	\emph{\begin{align}
	& f\left(x,q_1,1\right)+f\left(x+1,q_1,0\right)\notag\\
			& \ \ \ \ \ \ \ \ \ \ \ \ \ \ \ \ \  \ \ \ \ \ \leq f\left(x,q_1,0\right)+f\left(x+1,q_1,1\right)\label{supermodular1}\\
	& f\left(x,q_1,1\right)+f\left(x,q_1+1,0\right)\notag\\
	& \ \ \ \ \ \ \ \ \ \ \ \ \ \ \ \ \  \ \ \ \ \ \leq f\left(x,q_1,0\right)+f\left(x,q_1+1,1\right)\label{supermodular2}
	\end{align}}
and monotonicity:
	\emph{\begin{align}
	& f\left(x,q_1,l_2\right)\leq f\left(x+1,q_1,l_2\right)\label{mono1}\\
	& f\left(x,q_1,l_2\right)\leq f\left(x,q_1+1,l_2\right)\label{mono2}\\
	& f\left(x,q_1,0\right)\leq f\left(x,q_1,1\right)\label{mono3}
\end{align} }
\end{definition}
\noindent Eq. \eqref{A2_x} to \eqref{Ar} describe the threshold property that is clarified in the proof of Lemma \ref{lemma1}.
\begin{lemma}
\label{thm:F_class}
The optimal expected total discounted delay over the next $n$ time slots $J^n_\beta$ satisfies all properties in Definition \ref{F_class}, i.e., $J^n_\beta\in \mathscr{F}$ for each $n \in \mathbb{N}$.  
\end{lemma}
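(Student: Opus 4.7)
The plan is to establish Lemma \ref{thm:F_class} by induction on $n$. For the base case $n=0$, we have $J^0_\beta(x,q_1,l_2) = x+q_1+l_2$, a purely additive function. Each of the nine inequalities \eqref{A2_x}--\eqref{mono3} then reduces to a trivial arithmetic check: the threshold and supermodularity inequalities collapse to equalities since both sides sum to the same linear form, and the three monotonicities \eqref{mono1}--\eqref{mono3} follow from non-negativity of the unit increment in each coordinate.

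For the inductive step, assuming $J^n_\beta \in \mathscr{F}$, my first task is to show that the intermediate value $T^n_\beta$ also satisfies all nine properties. This is the pivotal sub-lemma of the whole argument, because the recursion \eqref{valueiteration} writes $J^{n+1}_\beta$ as a non-negative linear combination of shifted $T^n_\beta$ values plus the linear immediate cost $x+q_1+l_2$. Since $T^n_\beta$ is defined by a minimum over two alternatives (``hold'' versus ``move a packet into the idle sub-6 slot'' via $A_2$ or $A_r$), the verification of each property for $T^n_\beta$ splits into subcases according to which branch attains the min on each side. The threshold inequalities \eqref{A2_x}--\eqref{Ar} for $T^n_\beta$ are the most delicate: they encode that adding-to-sub-6 becomes preferable once the FastLane occupancy exceeds a certain level, and in the induction they are obtained by simultaneously invoking the corresponding threshold inequality, the supermodularity \eqref{supermodular1}--\eqref{supermodular2}, and the switch property \eqref{switch} of $J^n_\beta$, so that every combination of min-branches yields the desired bound.

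Once $T^n_\beta \in \mathscr{F}$ is in hand, I would verify each property for $J^{n+1}_\beta$ by substituting \eqref{valueiteration} into both sides of the target inequality and matching the four transition contributions ($\lambda$, $\mu_\text{p}$, $p_\text{a}\mu_\text{mm}$, $\mu_\text{sub-6}$) one at a time. Most pairings reduce immediately to the same property of $T^n_\beta$ evaluated at shifted arguments (this handles \eqref{A2_x}, \eqref{A2_q1}, \eqref{supermodular1}, \eqref{supermodular2}, \eqref{mono1}, \eqref{mono2}), while \eqref{mono3} and part of \eqref{switch} additionally require a monotonicity of $T^n_\beta$ to collapse the $l_2$-coupled terms. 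The immediate cost either cancels across the inequality or contributes a non-negative residual that is absorbed without affecting the sign.

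The hardest step will be the processing-completion transition $\beta\mu_\text{p}\, T^n_\beta\bigl((x-1)^+, x+q_1-(x-1)^+, l_2\bigr)$, whose arguments depend non-smoothly on whether $x=0$ (dummy packet, state unchanged) or $x\geq 1$ (real packet passing from the processing server into the mmWave queue). This boundary discontinuity interacts with \eqref{switch}---which is precisely designed to tame the jump by trading an $x$-unit for a $q_1$-unit---but only after a careful case split coupled with \eqref{mono1}--\eqref{mono2} applied to $T^n_\beta$. I expect this boundary treatment, together with the branch analysis needed to push the threshold inequalities through the min in the definition of $T^n_\beta$, to be the principal technical bottleneck of the induction.
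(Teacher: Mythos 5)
Your plan mirrors the paper's proof: induction on $n$ with the linear base case $J^0_\beta$, the pivotal sub-lemma that $T^n_\beta\in\mathscr{F}$ established by a case split on which branch of the min is active, and then a term-by-term verification of $J^{n+1}_\beta\in\mathscr{F}$ over the four transition rates with the boundary cases $x=0$ and $q_1=0$ identified as the genuine difficulties. This is essentially the paper's argument and is correct; the only cosmetic difference is that the paper first isolates the needed combinations of the defining inequalities as a standalone list of ``extended properties'' (e.g.\ convexity in $x$ and $q_1$) before running the case analysis, whereas you invoke those combinations inline.
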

\begin{proof}
Proof is provided in Appendix B. 
\end{proof}
\noindent Next, we use Lemma \ref{thm:F_class} to prove that each round of value iteration corresponds to a threshold-type policy as expressed by Lemma \ref{lemma1}. 

%
\begin{lemma}
\label{lemma1}
For each round of value iteration, the corresponding policy is of the threshold-type.
\end{lemma}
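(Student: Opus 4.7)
The plan is to extract the minimizer inside the value iteration recursion \eqref{valueiteration} and show it matches the form $D_{m_n}$ for some round-dependent threshold $m_n$. The only non-trivial choice embedded in the recursion is the min defining $T^n_\beta$ at states with $l_2 = 0$: either keep the sub-6 GHz interface idle (value $J^n_\beta(x, q_1, 0)$), or schedule a packet to it --- from the head buffer / processing portion (value $J^n_\beta(x-1, q_1, 1)$ when $x \geq 1$) or by reneging from the mmWave queue (value $J^n_\beta(0, q_1-1, 1)$ when $x = 0$). I would therefore define the gaps
\[
\phi_n(x, q_1) := J^n_\beta(x, q_1, 0) - J^n_\beta(x-1, q_1, 1), \qquad \psi_n(q_1) := J^n_\beta(0, q_1, 0) - J^n_\beta(0, q_1-1, 1),
\]
so that the induced policy ``adds to sub-6'' precisely where $\phi_n > 0$ (respectively, $\psi_n > 0$).

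Next I would apply Lemma \ref{thm:F_class} and read the threshold inequalities in Definition \ref{F_class} as monotonicity statements on these gaps. Shifting the index of \eqref{A2_x} by one rearranges to $\phi_n(x+1, q_1) \geq \phi_n(x, q_1)$, so $\phi_n$ is non-decreasing in $x$; \eqref{A2_q1} yields $\phi_n(x, q_1+1) \geq \phi_n(x, q_1)$ and \eqref{Ar} yields $\psi_n(q_1+1) \geq \psi_n(q_1)$. Hence the ``add to sub-6'' region is upward-closed in $(x, q_1)$. To collapse this to the single linear threshold $x + q_1 > m_n$ used by $D_{m_n}$, I would then invoke the switch property \eqref{switch}: moving a packet from $q_1$ forward into $x$ (preserving $x + q_1$) never decreases the value, so the sign of $\phi_n$ should be constant along the anti-diagonals $x + q_1 = \mathrm{const}$, forcing the zero-crossing to lie along a line. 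I then define $m_n$ as the largest FastLane length at which ``not adding'' remains optimal.

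To finish, Rules 1--3 from Theorem \ref{thm:optimality} pin down the remaining priorities: $A_1$ is preferred over $A_h$ when the processing server is idle, and $A_2$ is preferred over $A_r$ when the head buffer is non-empty. Combined with the FastLane-length threshold on the adding-to-sub-6 decision, the round-$(n+1)$ policy matches $D_{m_n}$ exactly. I expect the main obstacle to be the anti-diagonal step --- showing that the coordinate-wise monotonicity of $\phi_n$ really does collapse to a linear threshold in $x + q_1$ rather than a generic staircase. This will require a careful combined use of \eqref{switch} with the supermodularity inequalities \eqref{supermodular1}--\eqref{supermodular2}, essentially trading an $x$-monotonicity step of $\phi_n$ for a $q_1$-monotonicity step so that the sign of $\phi_n$ is preserved when a packet is moved between the two coordinates of FastLane.
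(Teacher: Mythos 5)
Your plan follows the paper's proof essentially step for step: both define the ``not\nobreakdash-adding vs.\ adding\nobreakdash-to\nobreakdash-sub\nobreakdash-6'' gap and read \eqref{A2_x}, \eqref{A2_q1} and \eqref{Ar} from Lemma~\ref{thm:F_class} as monotonicity of that gap in $x$ and $q_1$, then combine this with the priorities from Theorem~\ref{thm:optimality} to extract a threshold on the FastLane length. Two remarks. First, the paper additionally invokes \eqref{extension5} to bridge the reneging gap $\psi_n(q_1+1)$ at $x=0$ to the dispatching gap $\phi_n(1,q_1+1)$ at $x=1$, so that the gap stays monotone when the admissible ``adding'' action changes from $A_{r}$ to $A_2$; your chain needs this inequality at the boundary. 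Second, the anti-diagonal collapse you flag as the main obstacle is real, but the paper does not attempt it: its proof stops at ``the gap increases as $x+q_1$ increases (due to increase of $x$ or $q_1$ or both)'', i.e.\ at an upward-closed (staircase) switching region, and thereafter simply treats FastLane as a one-dimensional queue of length $y=x+q_1$. Your proposed fix will not close this step as stated, because \eqref{switch} gives both $f(x,q_1+1,0)\le f(x+1,q_1,0)$ and $f(x-1,q_1+1,1)\le f(x,q_1,1)$, shifting the minuend and subtrahend of $\phi_n$ in the same direction, so it does not determine the sign of $\phi_n(x+1,q_1)-\phi_n(x,q_1+1)$; and \eqref{supermodular1}--\eqref{supermodular2} are supermodularity statements in $(x,l_2)$ and $(q_1,l_2)$, not comparisons along the anti-diagonal in $(x,q_1)$. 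In short, your argument is at least as complete as the published one; the strictly linear (single-scalar) form of the threshold, as opposed to a switching curve, is not established by either route from the listed properties.
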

\begin{proof}
Proof is provided in Appendix C. 	
\end{proof}

\noindent Finally, we use Lemma \ref{lemma1} to provide our main result that the optimal policy is of the threshold-type. 
\begin{theorem}
\label{final_result}
For the discounted delay optimality problem, there exists an optimal stationary policy that is of the threshold-type with threshold $m\leq \infty$.
\end{theorem}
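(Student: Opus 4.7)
The plan is to pass from the finite-horizon threshold structure established in Lemma \ref{lemma1} to the infinite-horizon optimum $J_\beta$ via standard value-iteration convergence. First I would invoke the fact that, under the uniformization scaling $\lambda+\mu_\text{p}+p_\text{a}\mu_\text{mm}+\mu_\text{sub-6}=1$ with $\beta<1$, the Bellman operator $\mathcal{L}$ defined in \eqref{eq:optimality-condition} is a contraction on the Banach space $V$ of bounded real-valued functions on $Q$. Starting from $J_\beta^0(x,q_1,l_2)=x+q_1+l_2$ and iterating $J_\beta^{n+1}=\mathcal{L}J_\beta^n$, the Banach fixed-point theorem gives uniform (hence pointwise) convergence $J_\beta^n\to J_\beta$, where $J_\beta$ is the unique fixed point and equals the optimal discounted delay.

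Next I would show that the limit $J_\beta$ inherits all of the inequalities that define the class $\mathscr{F}$. This step is essentially free: every condition \eqref{A2_x}--\eqref{mono3} in Definition \ref{F_class} is a weak inequality between finite linear combinations of function values, and such inequalities are preserved under pointwise limits. Since Lemma \ref{thm:F_class} gives $J_\beta^n\in\mathscr{F}$ for every $n\in\mathbb{N}$, we conclude $J_\beta\in\mathscr{F}$.

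I would then apply Lemma \ref{lemma1}, but now to $J_\beta$ itself rather than to an iterate. Because $J_\beta\in\mathscr{F}$, the same case analysis used to derive the threshold structure of the greedy policy in the inductive step of value iteration applies verbatim to the greedy policy with respect to $J_\beta$: the supermodularity relations \eqref{supermodular1}--\eqref{supermodular2} together with the threshold relations \eqref{A2_x}--\eqref{Ar} force the decisions among $\{A_1,A_h\}$ versus $\{A_2,A_r,A_b\}$ to cross over once and only once as the FastLane occupancy $x+q_1$ grows, while Rules 1--3 from Theorem \ref{thm:optimality} (already baked into the definition of $D_m$) fix the remaining comparisons. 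The resulting greedy decision rule is therefore of the form $D_m$ for some threshold $m$, and by the Bellman optimality principle any greedy policy with respect to $J_\beta$ is optimal and stationary.

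Finally, I would address the boundary case: the argument does not a priori rule out the situation in which the crossover never occurs, i.e.\ not-adding-to-sub-6 is preferred at every finite state; this is captured by allowing $m=\infty$, which is exactly the statement that the threshold satisfies $m\leq\infty$. The main obstacle I anticipate is the verification that $\mathscr{F}$ is closed under the pointwise limit taken together with the operator $\mathcal{L}$ --- in particular, one must check that the ``$\min$'' in the intermediate value $T_\beta^n$ in Definition \ref{intermediate value} is compatible with the limit, but since $\min$ is continuous and the inequalities defining $\mathscr{F}$ are preserved pointwise, this reduces to routine bookkeeping once the uniform convergence of value iteration is in hand.
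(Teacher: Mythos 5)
Your proposal is correct and follows essentially the same route as the paper: the paper's proof is a two-line appeal to Lemma~\ref{lemma1} ("each round of value iteration yields a threshold-type policy, hence so does the limit"). Your version is in fact more careful than the paper's at the one delicate point --- rather than asserting that a limit of threshold-type policies is threshold-type (which requires controlling the sequence of thresholds $i_n$), you pass to the limit of the value functions, observe that the weak inequalities defining $\mathscr{F}$ survive pointwise convergence so that $J_\beta\in\mathscr{F}$, and then re-run the Lemma~\ref{lemma1} argument on $J_\beta$ itself; this closes the gap the paper leaves implicit, modulo the unbounded-cost caveat on the contraction argument that the paper also glosses over.
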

\begin{proof}
By Lemma 2, for each round of value iteration, corresponding policy is of threshold-type. Thus, as $n \to \infty$, the corresponding policy is also of the threshold-type, and the policy is expected total discounted delay optimal policy.
\end{proof}

\textbf{Optimal Threshold:} Theorem \ref{lemma2} proves that the value of the threshold in the optimal policy of each iteration, increases by at most one unit at the next iteration. 

\begin{theorem}
\label{lemma2}
If threshold value of the policy corresponding to $n$-th value iteration is $i_n$, then the policy corresponding to $n+1$-th value iteration has threshold value $i_{n+1}\in [0, i_n+1]$.	
\end{theorem}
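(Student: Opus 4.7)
The plan is to prove both inequalities by induction on $n$, using the threshold characterization from Lemma \ref{lemma1} together with the structural properties of the iterates $J_\beta^n$ provided by Lemma \ref{thm:F_class}. The lower bound $i_{n+1}\geq 0$ is immediate: $m=0$ is an admissible threshold (i.e., route to sub-6 whenever possible), so the minimization at iteration $n+1$ cannot push the threshold below zero. The substance of the theorem is therefore the upper bound $i_{n+1}\leq i_n+1$.

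My first step is to translate ``threshold $i_n$'' into a concrete sign condition. By Lemma \ref{lemma1} together with the threshold/supermodular inequalities \eqref{A2_x}--\eqref{Ar}, the decision ``adding-to-sub-6 vs.\ not-adding-to-sub-6'' at iteration $n$ is governed by the monotone sign of a single scalar difference, which I denote $\Delta_n(k)$, evaluated at a canonical state whose FastLane length is $k=x+q_1$. The threshold $i_n$ is then characterized by $\Delta_n(k)>0$ for $k\leq i_n$ and $\Delta_n(k)\leq 0$ for $k>i_n$. Showing $i_{n+1}\leq i_n+1$ thus reduces to showing $\Delta_{n+1}(i_n+2)\leq 0$.

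Next I would substitute the value iteration recursion \eqref{valueiteration} into $\Delta_{n+1}(i_n+2)$. Each of the four possible transitions (arrival, processing completion, mmWave departure, sub-6 departure), with respective weights $\lambda$, $\mu_\text{p}$, $\mu_\text{mm}p_\text{a}$, $\mu_\text{sub-6}$, moves the FastLane length by at most one, so $\Delta_{n+1}(i_n+2)$ expands into a convex combination of $\Delta_n$ evaluated at FastLane lengths in $\{i_n+1, i_n+2, i_n+3\}$. Every such length strictly exceeds $i_n$, so by the inductive characterization each summand is $\leq 0$, giving $\Delta_{n+1}(i_n+2)\leq 0$ and hence $i_{n+1}\leq i_n+1$.

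The main obstacle I expect is the bookkeeping: several admissible action pairs must be compared ($A_1$ vs.\ $A_2$, $A_h$ vs.\ $A_r$, $A_1$ vs.\ $A_b$), and the ``canonical'' difference $\Delta_n$ must be shown to capture all of them simultaneously via the different instances of \eqref{A2_x}--\eqref{Ar}. In addition, boundary states where a component of $\mathbf{q}$ is zero require care because of the $(\cdot)^+$ clipping in \eqref{valueiteration} and the collapsing identities noted after Theorem \ref{thm:optimality} (e.g.\ $J_\beta^n(\mathcal{T}(\mathbf{q}))=J_\beta^n(A_{r_\text{p}}(\mathbf{q}'))$). Each such corner case must either be verified directly from the corresponding instance of \eqref{A2_x}--\eqref{Ar} or bypassed via these identities, after which the convex-combination argument delivers the uniform bound $i_{n+1}\leq i_n+1$.
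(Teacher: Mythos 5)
Your overall strategy is the same as the paper's: use the threshold characterization from Lemma~\ref{lemma1} to reduce the claim to a single sign check at FastLane length $i_n+2$ (equivalently $y=i_n+1$ in the paper's $(y,l_2)$ coordinates), expand that difference through the value-iteration recursion \eqref{valueiteration}, and argue that each of the four transition terms has the correct sign because the relevant FastLane lengths all exceed $i_n$. The lower bound $i_{n+1}\geq 0$ is indeed vacuous.

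However, one step as written would fail. The sub-6 departure term does \emph{not} produce a difference of the form $\Delta_n(k)$: under $\mathcal{D}_2$ the state $(y,1)$ maps to $(y,0)$ while $(y+1,0)$ is unchanged (dummy departure), so the $\mu_\text{sub-6}$ term is $T^{n}_\beta(y+1,0)-T^{n}_\beta(y,0)$, a comparison of two states that both have the sub-6 server idle. Your claimed ``convex combination of $\Delta_n$ at lengths in $\{i_n+1,i_n+2,i_n+3\}$'' therefore does not cover this term; the paper bounds it separately using the monotonicity property \eqref{mono1}. Relatedly, the other three terms are differences of the intermediate value $T^{n}_\beta$, not of $J^{n}_\beta$, so the inductive sign condition on $J^{n}_\beta$ does not apply to them verbatim; the paper closes this by observing that for $y\geq i_n$ one has $T^{n}_\beta(y+1,0)=\min\{J^{n}_\beta(y+1,0),J^{n}_\beta(y,1)\}=J^{n}_\beta(y,1)=T^{n}_\beta(y,1)$, so those three differences are exactly zero (together with a short side argument ruling out the degenerate $\mathcal{D}_1$ case in which the lone mmWave packet would have been reneged, which cannot occur when $y\geq i_n+1$). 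Both repairs fit inside your framework, but they are needed for the argument to go through.
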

\begin{proof}
We re-express the system state as $\left(y, l_2\right)$, where $y\in \mathbb{N}$ denotes the queue length of FastLane. 
Then, Lemma \ref{lemma1} is expressed as follows:\begin{align}
	&J^n_\beta\left(y+1,0\right)<J^n_\beta\left(y,1\right),\,\,\text{if}\,\, y\leq i_n-1;\\
	&J^n_\beta\left(y+1,0\right)\geq J^n_\beta\left(y,1\right),\,\,\text{if}\,\, y \geq i_n.\label{thr1}
\end{align}
Since $J^{n+1}_\beta\left(y+1,0\right)-J^{n+1}_\beta\left(y,1\right)$ increases with $y$, it remains to show that
$
J^{n+1}_\beta\left(y+1,0\right)-J^{n+1}_\beta\left(y,1\right)\geq 0
$, 
when $y \geq i_n+1$. In fact, 
\begin{align}
J^{n+1}_\beta \left(y+1,0\right)&-J^{n+1}_\beta\left(y,1\right)\notag\\
=&\beta\lambda \left(T^{n}_\beta\left(y+2,0\right)-T^{n}_\beta\left(y+1,1\right) \right)\notag\\
+& \beta\mu_\text{p} \left(T^{n}_\beta\left(y+1,0\right)-T^{n}_\beta\left(y,1\right)\right )+\beta\mu_\text{mm}p_\text{a}Z\notag\\
+&\beta\mu_\text{sub-6} \left(T^{n}_\beta\left(y+1,0\right)-T^{n}_\beta\left(y,0\right)\right),\notag 
\end{align}
where $Z=T^{n}_\beta\left(y,0\right)-T^{n}_\beta\left(y-1,1\right)$ or $Z=T^{n}_\beta\left(y+1,0\right)-T^{n}_\beta\left(y,1\right)$. Note that if $\mathcal{D}_1\left(y+1,0\right)=\left(y,0\right)$, then $\mathcal{D}_1\left(y,1\right)=\left(y-1,1\right)$. On the contrary, if we assume that $\mathcal{D}_1\left(y,1\right)=\left(y,1\right)$, then the only packet in the mmWave queue is reneged to the sub-6 GHz interface. This only happens when $y=0$ by the optimal policy, which contradicts with that $y\geq i_n+1$.
\noindent Since $y\geq i_n+1>i_n$, we have 
$$T^{n}_\beta\left(y+1,0\right)-T^{n}_\beta\left(y,1\right)\stackrel{\text{\eqref{thr1} }}{=}J^{n}_\beta\left(y,1\right)-T^{n}_\beta\left(y,1\right)=0.$$ 
Similarly, we obtain that  $T^{n}_\beta\left(y+2,0\right)-T^{n}_\beta\left(y+1,1\right)=0$ and $T^{n}_\beta\left(y,0\right)-T^{n}_\beta\left(y-1,1\right) =0$. 
As for $\mu_\text{sub-6}$ term, by monotonicity, we have $T^{n}_\beta\left(y+1,0\right)- T^{n}_\beta\left(y,0\right)\geq 0$.  
\end{proof}
\noindent\emph{Remark:} If we start with policy $D_{0}$ and the optimal threshold is $m^*$, then we can obtain the optimal threshold value in $m^*$ steps via policy iteration.
\subsection{Average Delay Problem}
\label{sec:optimal-average-delay}
The following theorem extends our results to the average delay problem. 

\begin{theorem}
There exists an optimal stationary policy of the threshold-type for the average delay problem.
\end{theorem}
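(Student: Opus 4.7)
The plan is to use the vanishing discount factor method to lift the threshold structure from the discounted delay problem (Theorem 3) to the average delay problem. I would introduce the relative value function $h_\beta(\mathbf{q}) \triangleq J_\beta(\mathbf{q}) - J_\beta(\mathbf{0})$ and show that, along a subsequence $\beta_k\uparrow 1$, both $(1-\beta_k)J_{\beta_k}(\mathbf{0})$ and $h_{\beta_k}$ converge to limits $g^\star$ and $h^\star$ satisfying the average-cost optimality equation (ACOE)
\begin{equation*}
g^\star + h^\star(\mathbf{q}) = \mathbf{q}\cdot\mathbf{e} + \min_{\mathbf{u}\in U_\mathbf{q}}\sum_{\mathbf{q}'}\mathbb{P}(\mathbf{q}'\mid\mathbf{q},\mathbf{u})\, h^\star(\mathbf{q}').
\end{equation*}
By standard MDP results (the Sennott--Puterman framework), any stationary decision rule that attains this minimum is average-delay optimal.

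The technical work is to verify that this limiting argument is legitimate on the countably infinite state space $Q$. Specifically, I would check: (H1) $J_\beta(\mathbf{q})<\infty$ for every $\mathbf{q}\in Q$ and every $\beta\in[0,1)$; (H2) there exists a nonnegative function $L:Q\to\mathbb{R}_{\geq 0}$ such that $|h_\beta(\mathbf{q})|\leq L(\mathbf{q})$ uniformly in $\beta\in[0,1)$; and (H3) $L$ is integrable with respect to the stationary distribution induced by a reference stable policy. Under the system stability assumption of Section III, (H1) follows from property (e) of Theorem 1 together with the finiteness of the one-step cost. For (H2), I would exhibit a simple stable stationary policy --- for instance, one that always routes arriving packets to whichever interface is idle and reneges whenever the FastLane grows --- whose expected first-passage time from state $\mathbf{q}$ to $\mathbf{0}$ is linear in $\Vert\mathbf{q}\Vert_1$ and whose accumulated cost is quadratic in $\Vert\mathbf{q}\Vert_1$, giving $L(\mathbf{q}) = C\Vert\mathbf{q}\Vert_1^2$.

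Granted the ACOE, I would argue that the limiting optimal policy inherits the threshold form. For each $\beta$, Theorem 3 produces an optimal threshold $m(\beta)\in\mathbb{N}\cup\{\infty\}$ such that $D_{m(\beta)}$ is discounted-optimal. Since the family $\{D_m : m\in\mathbb{N}\cup\{\infty\}\}$ is countable, a diagonal extraction yields a subsequence $\beta_k\uparrow 1$ along which $m(\beta_k)\to m^\star$ and $h_{\beta_k}\to h^\star$ pointwise. The threshold-defining inequalities among the values $h_{\beta_k}(\mathbf{q})$ pass to the pointwise limit, so $D_{m^\star}$ attains the minimum in the ACOE and is therefore average-delay optimal.

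The main obstacle will be establishing the uniform bound (H2) on the relative value function over the infinite state space. The monotonicity and supermodularity properties of $J_\beta^n$ collected in Lemma 1, together with an explicit pathwise comparison against the stable reference policy above, are the tools I would rely on to control $h_\beta$ uniformly in $\beta$ and thereby complete the vanishing-discount argument.
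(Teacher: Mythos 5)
Your proposal follows essentially the same route as the paper: a vanishing-discount argument that transfers the threshold structure of the $\beta$-discounted optimal policies (Theorem 3) to the average-cost problem as $\beta \uparrow 1$. The paper simply invokes Lippman's result together with finiteness of the action set to conclude that $\pi^*_{\beta_n} \to \pi^*$ and hence that the limit is threshold-type, whereas you additionally spell out the Sennott-type hypotheses (finiteness of $J_\beta$, a uniform bound on the relative value function via a stable reference policy, and the subsequence extraction of the thresholds $m(\beta_k)$) that justify passing to the limit on the countable state space --- details the paper delegates to the cited reference rather than verifying.
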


\begin{proof}
According to \cite{lippman1973semi}, $\lim_{\beta_n\to 1}\left(1-\beta_n\right)J_{\beta_n}^{\pi_{\beta_n}^*}\left(\mathbf{q}\right)=J^{\pi^*}\left(\mathbf{q}\right)$, $\forall \mathbf{q}\in Q$, where $J_{\beta_n}^{\pi_{\beta_n}^*}\left(\mathbf{q}\right)$ denotes optimal expected total discounted delay under optimal policy $\pi_{\beta_n}^*$ associated with discount factor $\beta_n$ and $J^{\pi^*}\left(\mathbf{q}\right)$ denotes optimal average delay under optimal policy $\pi^*$. Since our action set is finite, by \cite{lippman1973semi}, there exists an optimal stationary policy for the average delay problem such that $\pi_{\beta_n}^*\to \pi^*$, which implies the optimal policy is of the threshold-type.
\end{proof}

In order to obtain the optimal threshold for the average delay minimization problem, we note that Theorem \ref{lemma2} also applies to this case as well, and the proof follows the same logic by removing the  discount factor $\beta$ in the proof of Theorem \ref{lemma2}.

\section{Simulation Results}
\label{sec:simulation}
In this section, we numerically investigate the performance of our proposed policy. To this end, we first investigate the relationship between the arrival rate and the optimal threshold. Next, we compare the performance of our policy against the MaxWeight policy. 

\subsection{Relationship between Arrival Rate and Optimal Threshold}
We investigate how the arrival rate $\lambda$ affects the optimal threshold of our policy. In simulations, we set $\mu_\text{mm}=\mu_\text{p}=100$, $\mu_\text{sub-6}=1$ and $p_\text{a}=0.6$. Then, we investigate how average delay changes as threshold varies given a value of $\lambda\in\{30,35,40,45,50,55\}$.
Our simulation results show that for $\lambda=30$, $35$, $40$, curves of average delay vs different threshold are similar. For lack of space, we only provide results for $\lambda=30$ here.

\begin{figure}[H]
\centering
 \subfloat[Arrival rate $\lambda=30$]{
 \includegraphics[scale=.21]{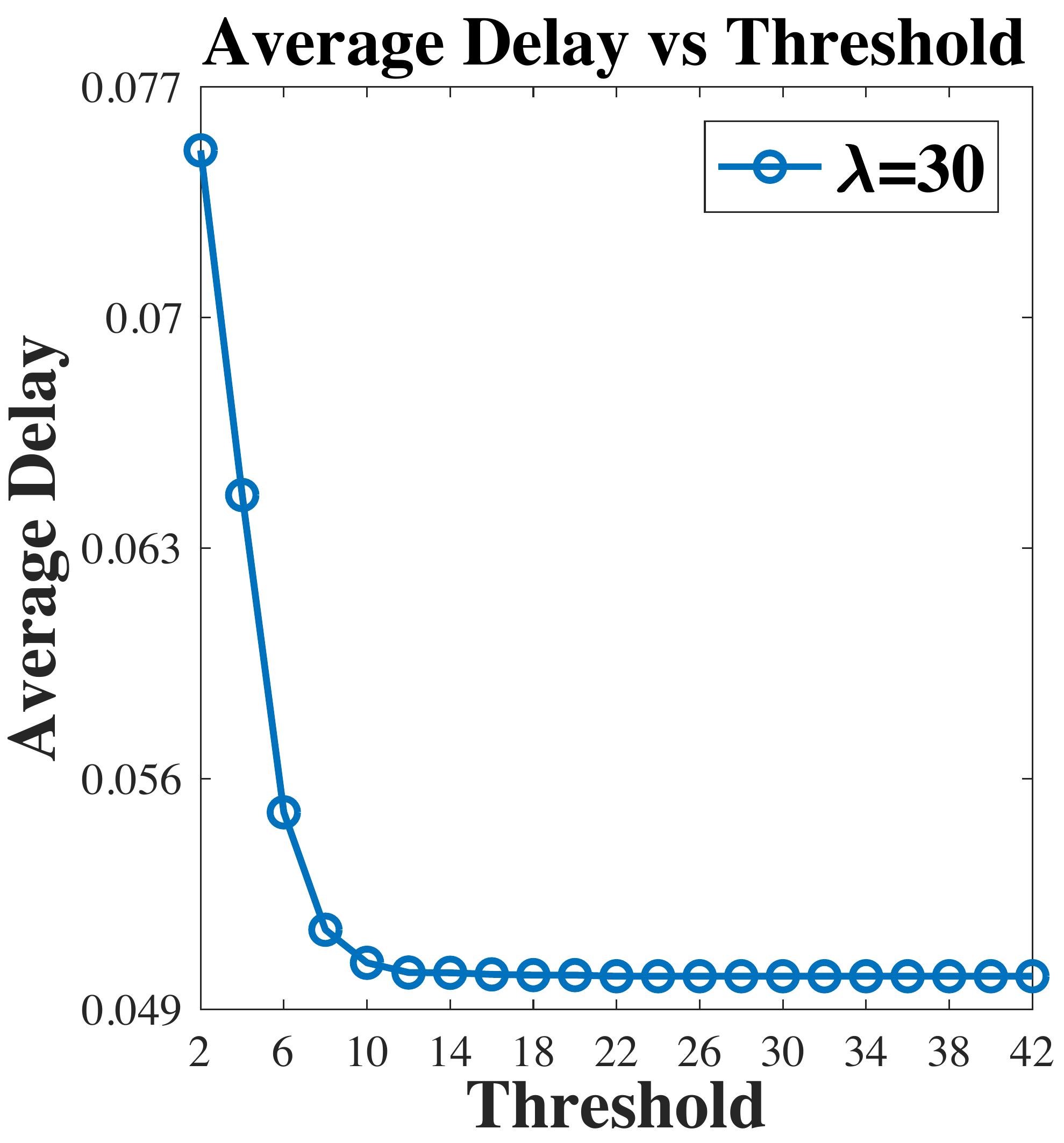}
 \label{MArrival Rate 30}
 }
 \subfloat[Arrival rate $\lambda=45$]{
 \includegraphics[scale=.21]{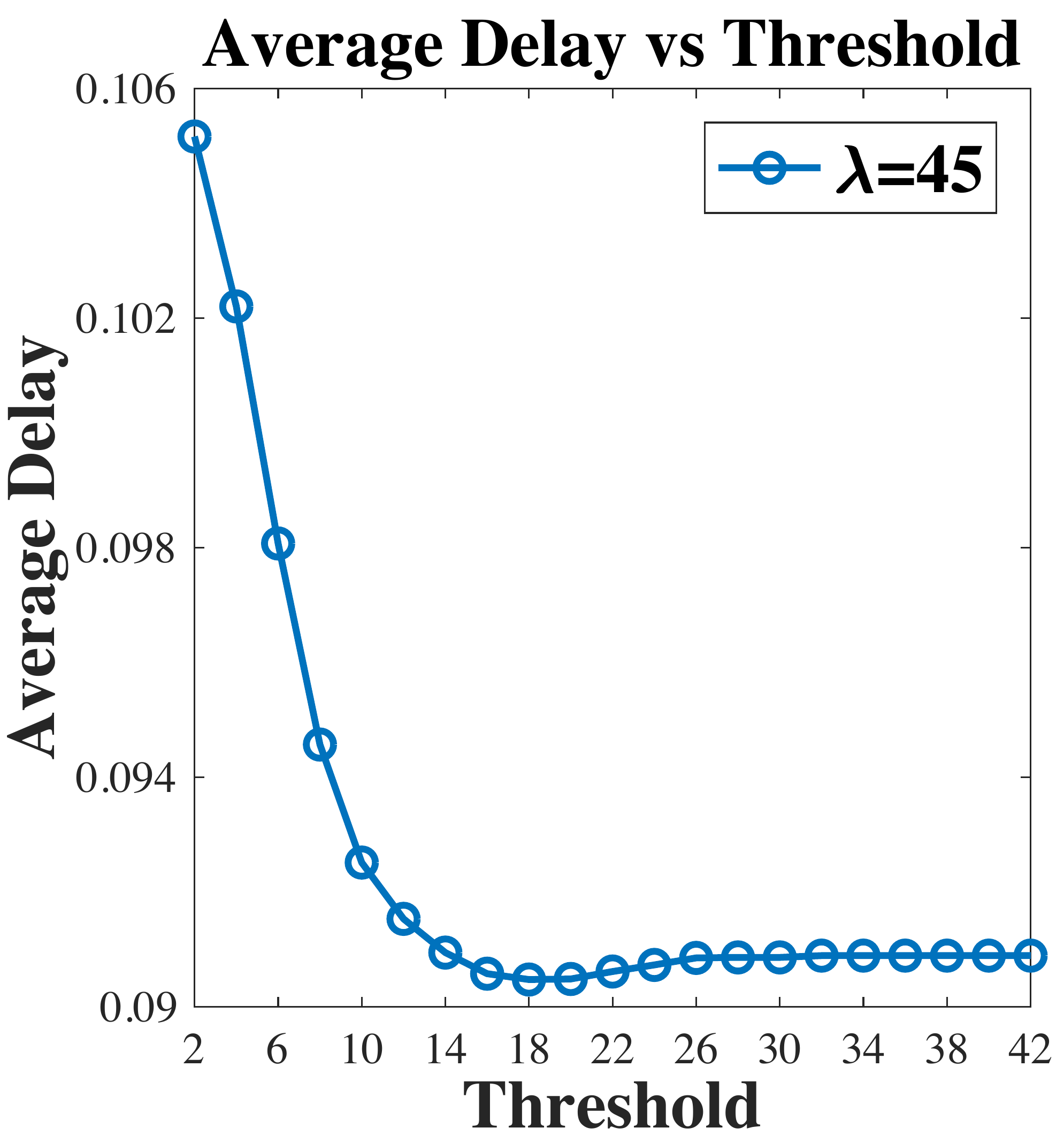}
 \label{MArrival Rate 45}
 }
 \\
 \centering
 \hspace{-.1cm}
 \subfloat[Arrival rate $\lambda=50$]{
 \includegraphics[scale=.21]{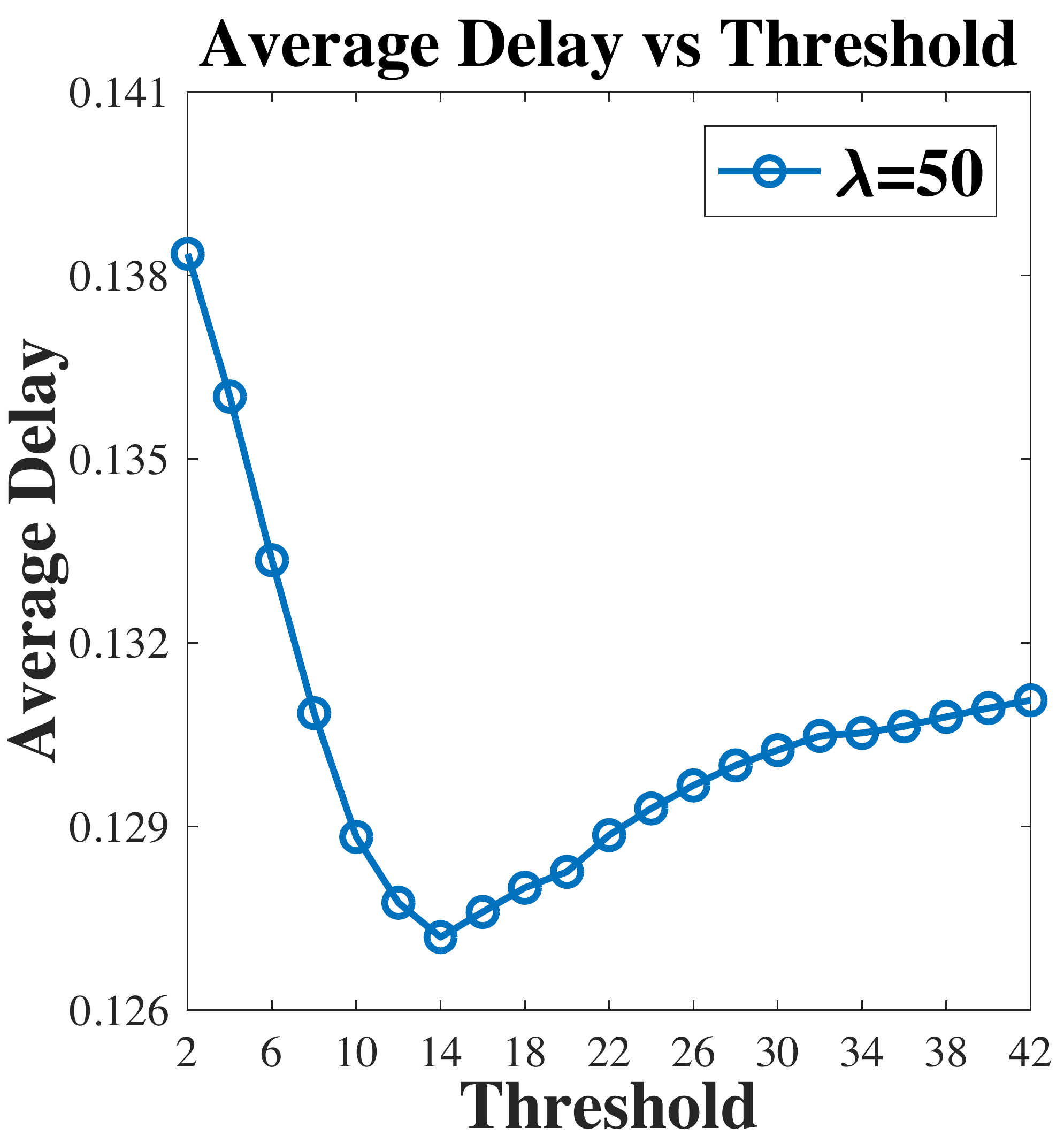}
 \label{MArrival Rate 50}
 }
 \subfloat[Arrival rate $\lambda=55$]{
 \includegraphics[scale=.21]{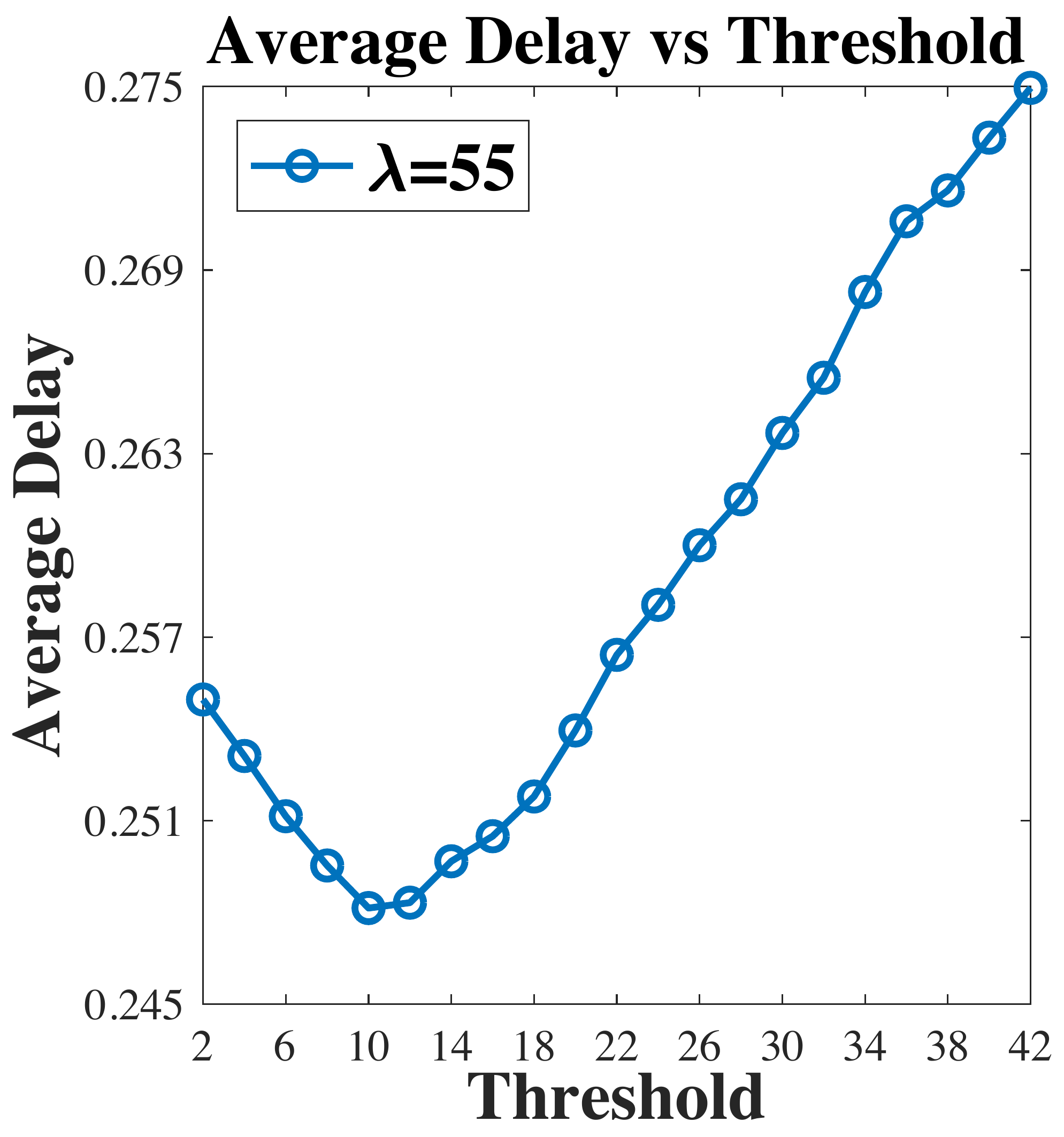}
 \label{MArrival Rate 55}
 }
 \caption{Average Delay vs Threshold for various arrival rate.}
 \label{fig:Arrival_OptThr}
\end{figure}
For each result in Fig. \ref{fig:Arrival_OptThr} (corresponding to a certain $\lambda$), 
the optimal threshold corresponds to the lowest average delay. For example, in Fig. \ref{MArrival Rate 45} (i.e., $\lambda=45$), the optimal threshold is 18.
As shown in Fig. \ref{MArrival Rate 30}, we can see that if the arrival rate is not high, a small enough threshold provides low delay and as the threshold increases, the delay does not change much. This is because if packets arrive at system slowly, waiting and service times of each packet in mmWave line will be probably less than service time of the sub-6 GHz server. This implies that the mmWave server does not need the aid of the sub-6 GHz server. On the contrary, adding packets to the sub-6 GHz server increases delay because the average service rate of the mmWave is much higher than that of the sub-6 GHz. 
In addition, Fig. \ref{MArrival Rate 45} to Fig. \ref{MArrival Rate 55} demonstrate that the optimal threshold decreases with the arrival rate. This is expected since a faster arrival rate may increase waiting time, which increases the chance of routing through the sub-6 GHz interface. 
\subsection{Benefits from the Sub-6 GHz with Threshold-Type Policy}
In this section, we demonstrate benefits of the sub-6 GHz interface to combat the effects of blockage and intermittent connectivity, especially under heavy traffic scenarios.  
To this end, we compare delay performance in systems with and without the sub-6 GHz. For the system with the sub-6 GHz (our integrated system), the proposed threshold-type policy is utilized. For the system without the sub-6 GHz server, no scheduling policy applies since  only mmWave interface exists in the system. To provide a more clear exhibition of our simulation results, we define relative delay improvement $\hat{W}$ as follows:
\begin{align}
\hat{W}=
\frac{\bar{W}(\text{no sub-6})-\bar{W}(\text{with sub-6})}{\bar{W}(\text{no sub-6})}	\notag;
\end{align}
where $\bar{W}(\text{with sub-6})$ and $\bar{W}(\text{no sub-6})$ denote the average delay in the integrated system and that in the system without the sub-6 GHz server, respectively. 
\begin{figure}[H]
\centering
 \subfloat[X-Z view]{
 \includegraphics[scale=.21]{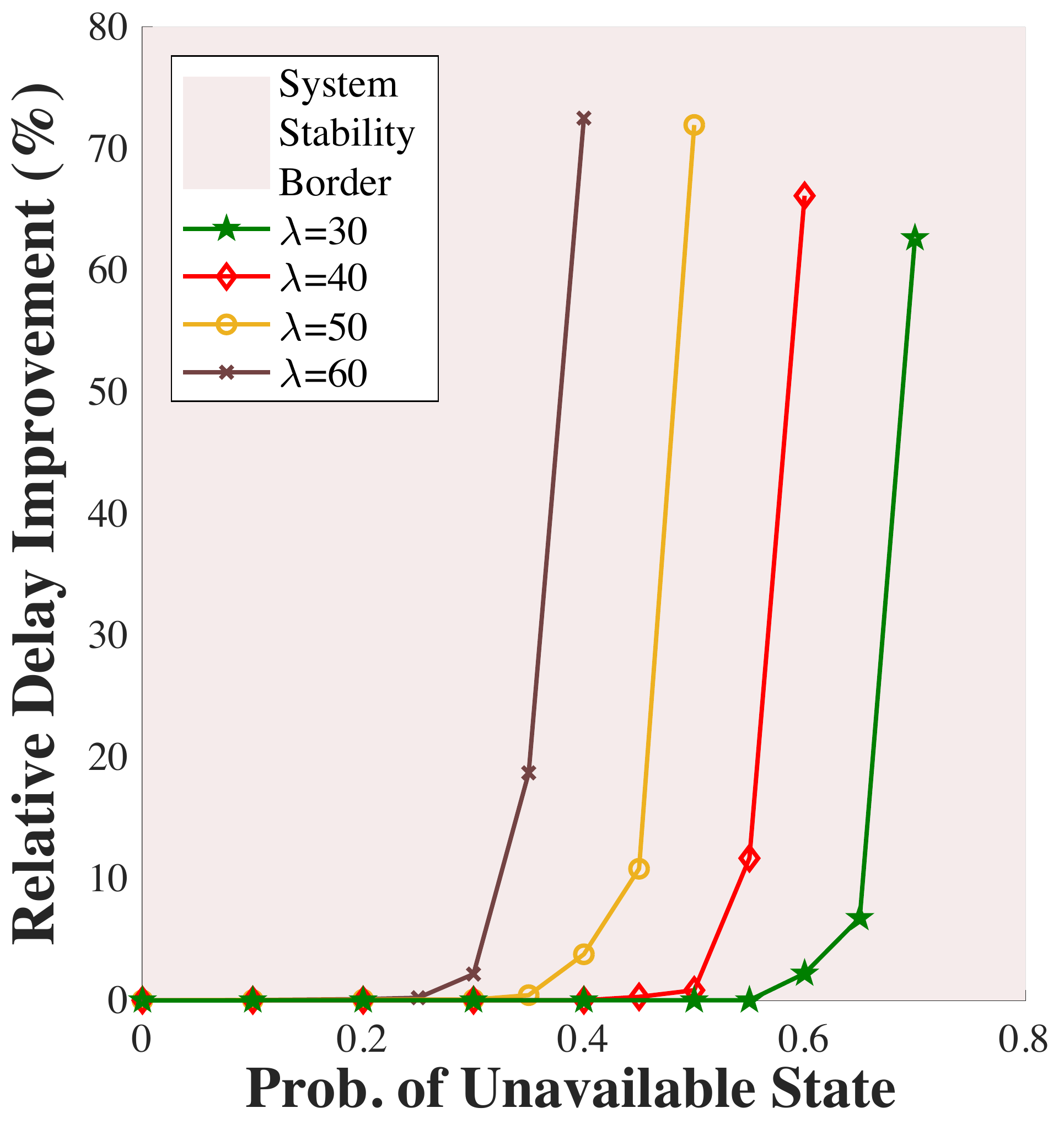}
 \label{XZview}
 }
 \subfloat[3D view]{
 \includegraphics[scale=.21]{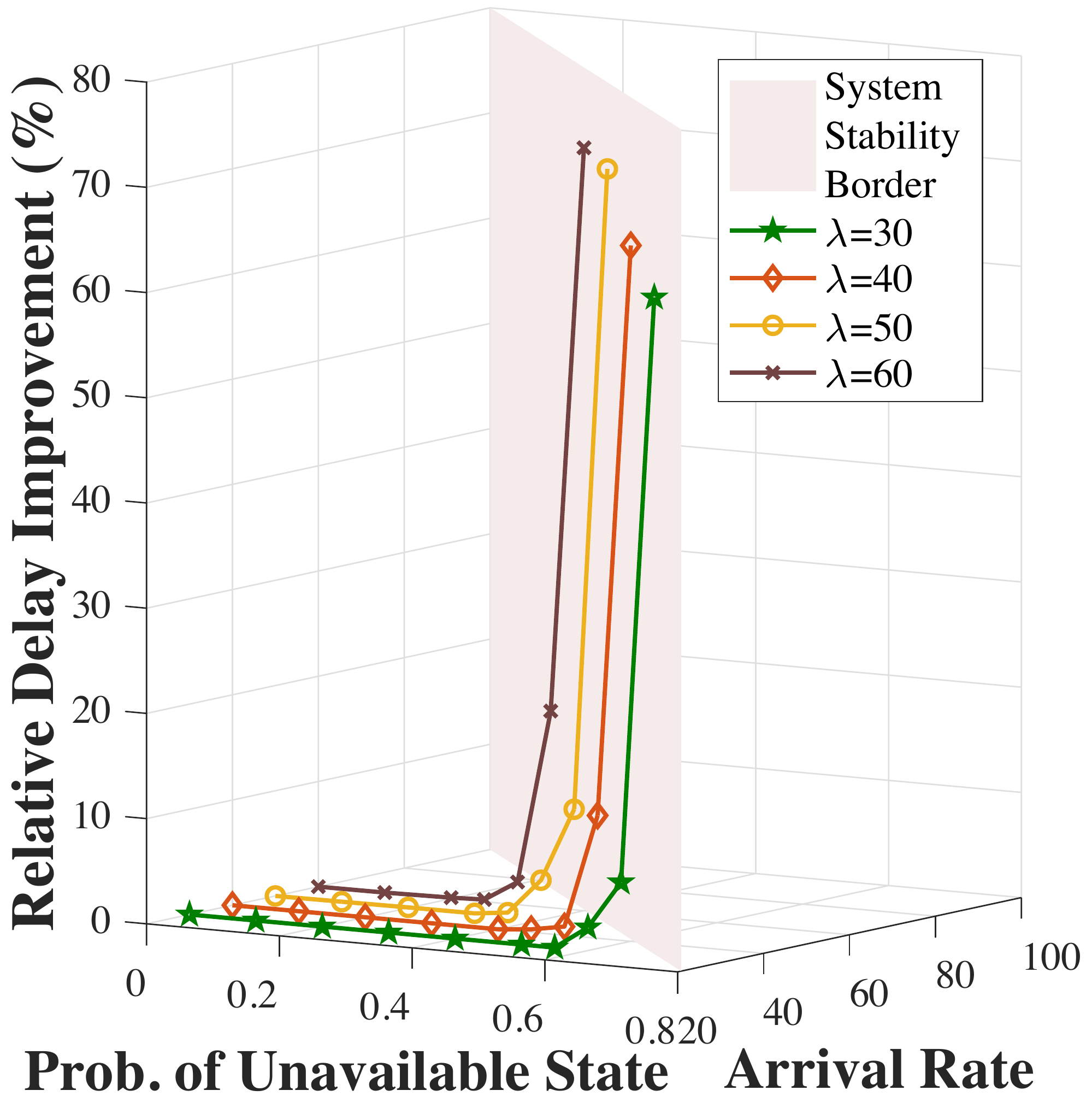}
 \label{3Dview}
 }
 \caption{Delay Performance vs Probability of Unavailable State.}
 \label{fig:delay_na_arr}
\end{figure}
In simulation, we investigate how $\hat{W}$ changes as probability of unavailable state (i.e., $p_\text{na}$) increases from $0$ to the largest value that ensures stability of the system under fixed arrival rate. We repeat the simulation for different arrival rates. From the results shown in Fig. \ref{XZview}, we observe that for a certain arrival rate, benefits of the sub-6 GHz interface becomes more pronounced as the probability of unavailable state increases. For instance, for the arrival rate of $\lambda = 60$, there is  up to $70\%$ delay reduction using the integrated architecture paired with the threshold-based policy.
Furthermore, in order to exhibit the excellent delay performance in heavy traffic scenarios, in Fig. \ref{fig:delay_na_arr} we introduce a system stability border which is a three dimensional plate that is expressed as $\lambda=\mu_\text{sub-6}+(1-p_\text{na})\mu_\text{mm}$. As shown in Fig. \ref{3Dview}, the sub-6 GHz interface becomes more beneficial as either the arrival rate or probability of unavailable state increases, i.e., heavy traffic scenarios.


\subsection{Comparison with MaxWeight Policy}

In this section, we investigate the performance of the threshold-type policy compared with the MaxWeight policy. From Fig. \ref{fig:Arrival_OptThr}, we concluded that the optimal threshold is related to the arrival rate. Hence, for each value of $\lambda$, we use the corresponding optimal threshold. From Fig. \ref{fig:comp-with-maxweight}, we note that the threshold-type policy achieves a better delay performance compared with the MaxWeight policy, while it provides a similar throughput performance. We note that the advantage of our threshold-type policy in delay performance over MaxWeight gets smaller when the arrival rate increases.
\begin{figure}[t]
\centering
 \subfloat[Delay Performance]{
 \includegraphics[scale=.34]{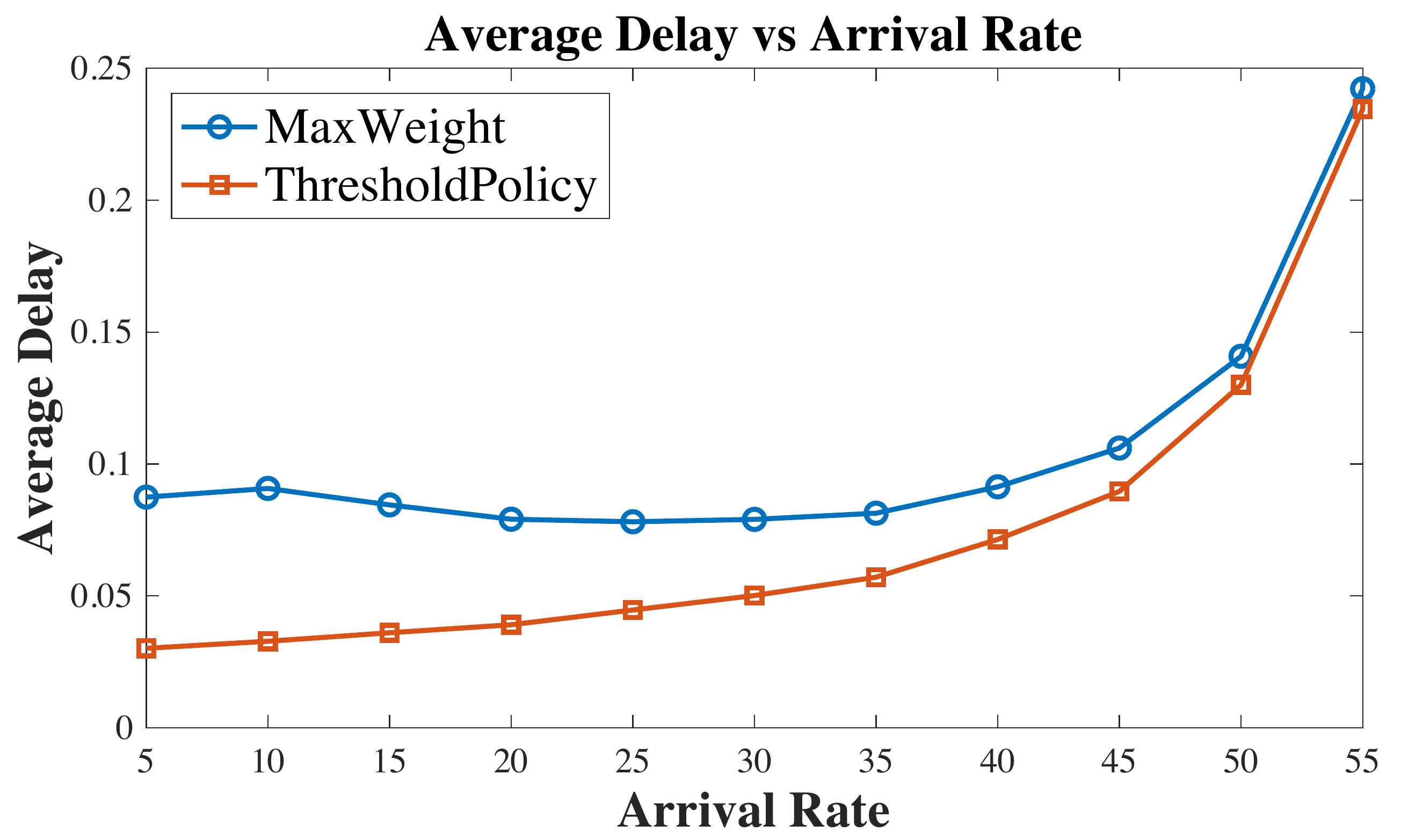}
 \label{MArrival Rate 10}
 }
 \\
 \centering
 \subfloat[Throughput Performance]{
 \includegraphics[scale=.34]{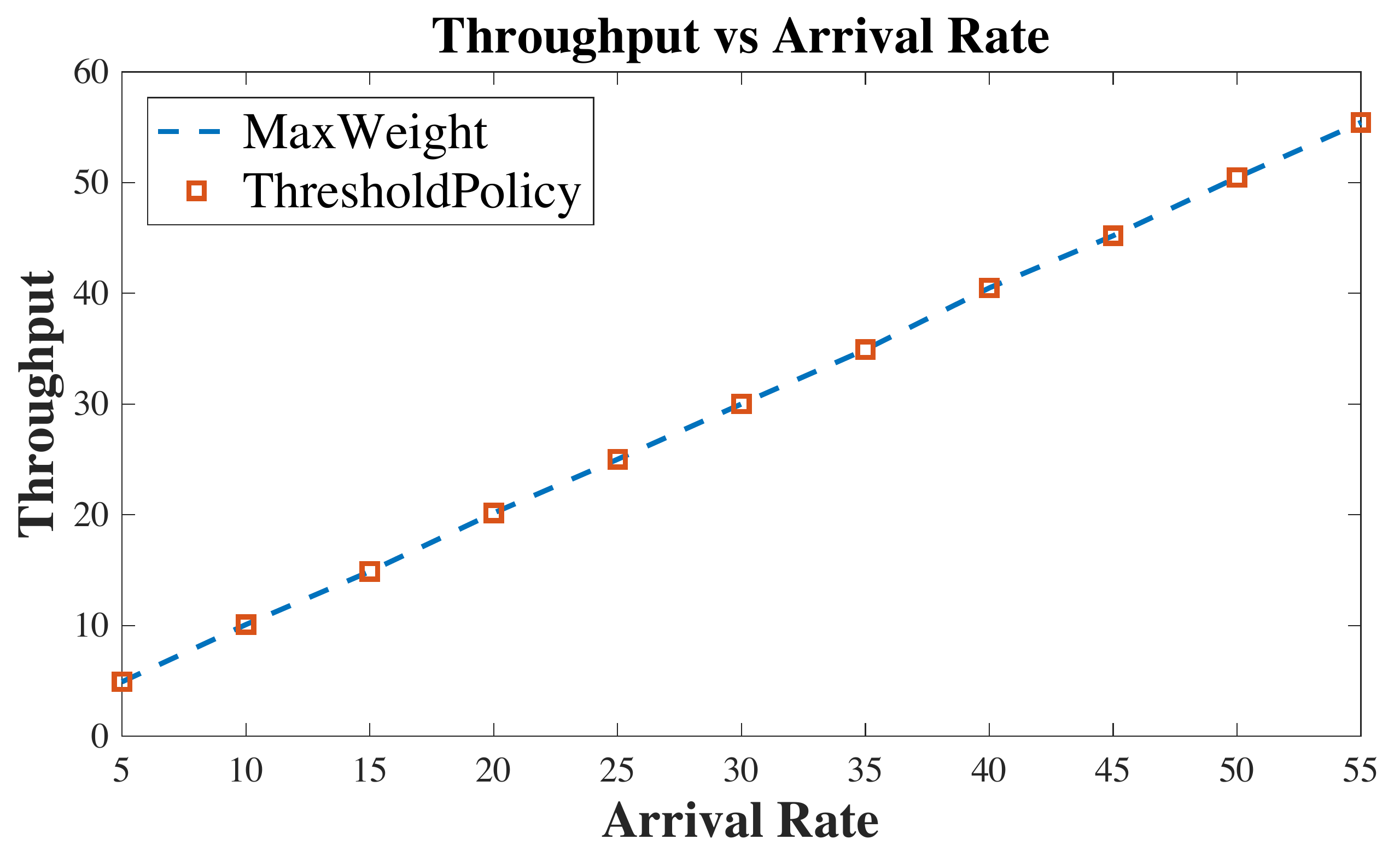}
 \label{MArrival Rate 15}
 }
 \caption{Delay and throughput performance of our proposed threshold-type policy compared with MaxWeight policy. 
 }
 \label{fig:comp-with-maxweight}
\end{figure}

\section{Conclusion}
\label{sec:conclusion}

In this paper, we considered an integrated sub-6 GHz -- mmWave architecture wherein the sub-6 GHz is used as a fallback mechanism to combat blockage and intermittent nature of the mmWave communication. In this case, the arrival packets can be transmitted using the mmWave or sub-6 GHz interface or both. We investigated the optimal packet scheduling policy such that the expected total discounted delay and the average delay are minimized and showed that the optimal policy is of the threshold-type. Through numerical results, we further investigated the delay and throughput performance of our policy to demonstrate that the threshold-type policy in fact provides a much smaller delay compared with the MaxWeight policy, while it achieves a similar throughput performance.

\appendices
\section{Proof of Theorem 1}

Note that zero function (i.e., $v=0$) satisfies all properties in Theorem \ref{thm:optimality}. Besides, it is known that for any function $f \in V$, $\lim_{n \rightarrow \infty}\mathcal{L}^{(n)}f=J_\beta$. Thus, in order to show that $J_\beta$ satisfies all properties in Theorem \ref{thm:optimality}, we start with zero function and show that $\mathcal{L}v$  satisfies the properties if $v$ satisfies the properties in Theorem \ref{thm:optimality}. 
For sake of exposition of the following proof, define $K_{\mathbf{s}}$ as the set of admissible actions in state $\mathbf{s}\in Q$.

\emph{\textbf{Property (a):}} 
By the definition of operator $\mathcal{L}$ (Eq. \eqref{eq:optimality-condition}), we show the result by respectively proving following inequalities:
\begin{align}
&\text{(1)} \ \min\limits_{u_0} v\big(u_0\left(\mathcal{A}_0\left(A_1\left(\mathbf{q}\right)\right)\right)\big)\leq \min\limits_{u_0}v\big(u_0\left(\mathcal{A}_0\left(A_{h}\left(\mathbf{q}\right)\right)\right)\big) \notag\\
&\text{(2)} \ \min\limits_{u_1} v\big(u_1\left(\mathcal{T}\left(A_1\left(\mathbf{q}\right)\right)\right)\big)\leq \min\limits_{u_1}v\big(u_1\left(\mathcal{T}\left(A_{h}\left(\mathbf{q}\right)\right)\right)\big) \notag\\
&\text{(3)} \ \min\limits_{u_2} v\big(u_2\left(\mathcal{D}_1\left(A_1\left(\mathbf{q}\right)\right)\right)\big)\leq \min\limits_{u_2}v\big(u_2\left(\mathcal{D}_1\left(A_{h}\left(\mathbf{q}\right)\right)\right)\big) \notag\\
&\text{(4)} \ \min\limits_{u_3} v\big(u_3\left(\mathcal{D}_2\left(A_1\left(\mathbf{q}\right)\right)\right)\big)\leq \min\limits_{u_3}v\big(u_3\left(\mathcal{D}_2\left(A_{h}\left(\mathbf{q}\right)\right)\right)\big).	\notag
\end{align}

\noindent \textbf{(1):} Use $\mathbf{s}_1$ and $\mathbf{s}_2$ to denote $\mathcal{A}_0\left(A_1\left(\mathbf{q}\right)\right)=\left(q_0, 1, q_1, l_2\right)$ and $\mathcal{A}_0\left(A_h\left(\mathbf{q}\right)\right)= \left(q_0+1, 0, q_1,l_2\right)$, respectively. Then, in order to show (1), we only need to show that for each $a_2 \in K_{\mathbf{s}_2}$, there exists $a_1 \in K_{\mathbf{s}_1}$ such that $v(a_2(\mathbf{s}_2))\geq v(a_1(\mathbf{s}_1))$. Same logic will also be used in the following proof for other case s.

Generally, $\{A_{h}, A_1\}\subseteq K_{\mathbf{s}_2}$ and $A_h \in K_{\mathbf{s}_1}$. Then, we have 
\begin{align} 
v\left(A_{h}\left(\mathbf{s}_2\right)\right) &\stackrel{\text{(a)}}{\geq} v\left(A_{1}\left(\mathbf{s}_2\right)\right)=  v\left(A_h\left(\mathbf{s}_1\right)\right). \nonumber
\end{align}
If $l_2=0$, then $A_2\in K_{\mathbf{s}_2}$ and $A_r\in K_{\mathbf{s}_1}$. Notice that $A_{r}\left(\mathbf{s}_1\right)=\left(q_0, 0, q_1, 1\right)$ by property (c). In the case, we obtain
$$v\left(A_{2}\left(\mathbf{s}_2\right)\right)=v\left(A_{r}\left(\mathbf{s}_1\right)\right).$$
If $l_2=0$ and $q_1\geq 1$, then $A_r\in A_{\mathbf{s}_2}$ and we have
\begin{align}
v\left(A_r\left(\mathbf{s}_2\right)\right)&\stackrel{\text{(b)}}{\geq} v\left(A_{2}\left(\mathbf{s}_2\right)\right)=v\left(A_{r}\left(\mathbf{s}_1\right)\right).	\notag
\end{align}


\noindent \textbf{(2):} 
Denote $\mathcal{T}\left(A_1\left(\mathbf{q}\right)\right)=\left(q_0-1, 0, q_1+1, l_2\right)$ as $\mathbf{s}_3$ and denote $\mathcal{T}\left(A_h\left(\mathbf{q}\right)\right)=\left(q_0, 0, q_1, l_2\right)$ as $\mathbf{s}_4$.
Generally, $A_h \in K_{\mathbf{s}_3}$ and $\{A_{h}, A_1\}\subseteq K_{\mathbf{s}_4}$. Thus, we have
\begin{align} 
v\left(A_{h}\left(\mathbf{s}_4\right)\right) &\stackrel{\text{(a)}}{\geq} v\left(A_{1}\left(\mathbf{s}_4\right)\right)\notag\\
&\stackrel{\text{(c)}}{\geq}v\left(\mathcal{T}\left(q_0-1, 1, q_1,l_2\right)\right)= v\left(A_h\left(\mathbf{s}_3\right)\right). \nonumber
\end{align}
If $l_2=0$, then $A_2 \in K_{\mathbf{s}_4}$ and $A_r \in K_{\mathbf{s}_3}$. Then, we obtain $$v\left(A_{2}\left(\mathbf{s}_4\right)\right)=v\left(A_{r}\left(\mathbf{s}_3\right)\right).$$
If $l_2=0$ and $q_1\geq 1$, then $A_r\in K_{\mathbf{s}_4}$ and we have
\begin{align}
	v(A_{r}(\mathbf{s}_4))&\stackrel{\text{(b)}}{\geq} v(A_{2}(\mathbf{s}_4))=v(A_{r}(\mathbf{s}_3)).\notag
\end{align}
\noindent \textbf{(3):} 
Denote $\mathcal{D}_1\left(A_1\left(\mathbf{q}\right)\right)=\left(q_0-1, 1, \left(q_1-1\right)^+, l_2\right)$ as $\mathbf{s}_5$ and denote $\mathcal{D}_1\left(A_h\left(\mathbf{q}\right)\right)=\left(q_0, 0, \left(q_1-1\right)^+, l_2\right)$ as $\mathbf{s}_6$. Generally, $\{A_h,A_1\}\subseteq K_{\mathbf{s}_6}$ and $A_h \in K_{\mathbf{s}_5}$.
Then, we get 
\begin{align} 
v\left(A_{h}\left(\mathbf{s}_6\right)\right) &\stackrel{\text{(a)}}{\geq} v\left(A_{1}\left(\mathbf{s}_6\right)\right)= v\left(A_{h}\left(\mathbf{s}_5\right)\right). \nonumber
\end{align}
If $l_2=0$, then $A_2 \in K_{\mathbf{s}_6}$ and $A_r \in K_{\mathbf{s}_5}$. Then, we have
 $$v\left(A_{2}\left(\mathbf{s}_6\right)\right)=v\left(A_{r}\left(\mathbf{s}_5\right)\right).$$
If $l_2=0$ and $\left(q_1-1\right)^+\geq 1$, then $A_r \in K_{\mathbf{s}_6}$. Thus, we get
\begin{align}
v\left(A_{r}\left(\mathbf{s}_6\right)\right)&\stackrel{\text{(b)}}{\geq} v\left(A_{2}\left(\mathbf{s}_6\right)\right)=v\left(A_{r}\left(\mathbf{s}_5\right)\right).	\notag
\end{align}


\noindent \textbf{(4):} 
Denote $\mathcal{D}_2\left(A_1\left(\mathbf{q}\right)\right)=\left(q_0-1, 1, q_1, 0\right)$ as $\mathbf{s}_7$ and denote $\mathcal{D}_2\left(A_h\left(\mathbf{q}\right)\right)=\left(q_0, 0, q_1, 0\right)$ as $\mathbf{s}_8$. 
Generally, $\{A_{h}, A_1, A_2\} \subseteq K_{\mathbf{s}_8}$ and $\{A_{h}, A_r\} \subseteq K_{\mathbf{s}_7}$. Then, we obtain
\begin{align}
&v\left(A_{h}\left(\mathbf{s}_8\right)\right) \stackrel{\text{(a)}}{\geq} v\left(A_{1}\left(\mathbf{s}_8\right)\right) =  v\left(A_h\left(\mathbf{s}_7\right)\right),\notag\\
&v\left(A_{2}\left(\mathbf{s}_8\right)\right)=v\left(A_{r}\left(\mathbf{s}_7\right)\right).	\notag
\end{align}
If $q_1\geq 1$, $A_r \in K_{\mathbf{s}_8}$, then we have
\begin{align}
	v\left(A_{r}\left(\mathbf{s}_8\right)\right)&\stackrel{\text{(b)}}{\geq} v\left(A_{2}\left(\mathbf{s}_8\right)\right)=v\left(A_{r}\left(\mathbf{s}_7\right)\right).\notag
\end{align}

Therefore we conclude that $\mathcal{L}v\left(A_1\left(\mathbf{q}\right)\right)\leq \mathcal{L}v\left(A_{h}\left(\mathbf{q}\right)\right)$.

\emph{\textbf{Property (b):}}
Notice that by property (c), $A_r(\mathbf{q})=(q_0, 0, q_1, 1)$, if $l_1=1$; otherwise, $A_r(\mathbf{q})=(q_0, 0, q_1-1, 1)$.

\emph{As for the case $l_1=0$:} the only thing for us to note is that if $q_1=1$, then $\mathcal{D}_1\left(A_2\left(\mathbf{q}\right)\right)=\left(q_0-1,0,0,1\right)$ and $\mathcal{D}_1\left(A_r\left(\mathbf{q}\right)\right)=\left(q_0,0,0,1\right)$. Thus, by monotonicity property (e), we directly obtain that $\min_{u_2}{v\left(u_2\left(\mathcal{D}_1\left(A_2\left(\mathbf{q}\right)\right)\right)\right)}\leq \min_{u_2}{v\left(u_2\left(\mathcal{D}_1\left(A_r\left(\mathbf{q}\right)\right)\right)\right)}$. With this and similar argument in  proof of property (a), we can obtain that for $q_0\geq 1$ and $q_1\geq 1$, $\mathcal{L}v\left(A_2\left(q_0, 0, q_1,0\right)\right)\leq \mathcal{L}v\left(A_r\left(q_0, 0, q_1,0\right)\right)$.

\emph{As for the case $l_1=1$:} the case can be proved with similar argument and logic in proof for property (a).

\emph{\textbf{Property (c):}}
The property can be easily shown with similar logic and argument in proof for property (a) and (b).

\emph{\textbf{Property (d):}}
With similar argument, we easily obtain that $\min_{u_0} v\left(u_0\left(\mathcal{A}_0\left(A_1\left(\mathbf{q}\right)\right)\right)\right)\leq \min_{u_0}v\left(u_0\left(\mathcal{A}_0\left(A_2\left(\mathbf{q}\right)\right)\right)\right)$, where $\mathbf{q}=\left(q_0,0,0,0\right)$. 
It remains to show 
\begin{align}
&\mu_\text{p}B_1+p_\text{a}\mu_\text{mm}B_2+\mu_\text{sub-6}B_2\notag\\
\leq &\mu_\text{p}B_3+p_\text{a}\mu_\text{mm}B_3+\mu_\text{sub-6}B_4.	
\label{*}
\end{align}
where 
\begin{align}
& B_1\triangleq\min_{u_1} v\left(u_1\left(q_0-1, 0, 1, 0\right)\right)\notag\\
& B_2\triangleq\min_{u_2} v\left(u_2\left(q_0-1, 1, 0, 0\right)\right)=\min_{u_3} v\left(u_3\left(q_0-1, 1, 0, 0\right)\right)\notag\\
& B_3\triangleq \min_{u_1}v\left(u_1\left(q_0-1, 0, 0, 1\right)\right)=\min_{u_2}v\left(u_2\left(q_0-1, 0, 0, 1\right)\right)\notag\\
& B_4\triangleq \min_{u_3} v\left(u_3\left(q_0-1, 0, 0, 0\right)\right)\notag. 	
\end{align}
Note that 
$B_4 \leq B_1 \leq B_2 \leq B_3 \label{**}
$, which can be shown via similar argument in previous proof. 




\begin{enumerate}
	\item \emph{Case that $q_0>1$:} notice that by property (a) and (d), $B_4=v\left(q_0-2,1,0,0\right)$ and by property (a), $B_3=v\left(q_0-2,1,0,1\right)$. 
\begin{enumerate}
\item  If $v\left(q_0-1,1,0,0\right)<v\left(q_0-2,1,0,1\right)$, then $B_2=v(q_0-1,1,0,0)<B_3$. It implies that the sub-6 GHz interface is not preferable in the case. 
Then, $B_2-B_4\leq \frac{1}{\mu_\text{sub-6}}$
Besides, since the least time that is required for an event to happen is $\min\left\{\frac{1}{\lambda},\frac{1}{\mu_\text{p}},\frac{1}{p_\text{a}\mu_\text{mm}},\frac{1}{\mu_\text{sub-6}}\right\}$=$\min\left\{\frac{1}{\lambda},\frac{1}{\mu_\text{p}},\frac{1}{p_\text{a}\mu_\text{mm}}\right\}$, we have $B_3-B_2\geq \min\left\{\frac{1}{\lambda},\frac{1}{\mu_\text{p}},\frac{1}{p_\text{a}\mu_\text{mm}}\right\}$. 
Therefore, 
we have 
\begin{align}
&\mu_\text{p}\left(B_1-B_3\right)+p_\text{a}\mu_\text{mm}\left(B_2-B_3\right)+\mu_\text{sub-6}\left(B_2-B_4\right)\notag\\ 
\leq & \left(\mu_\text{p}+p_\text{a}\mu_\text{mm}\right)\left(B_2-B_3\right)+\mu_\text{sub-6}\left(B_2-B_4\right)\notag\\
\leq & \frac{\mu_\text{sub-6}}{\mu_\text{sub-6}}-\left(\mu_\text{p}+p_\text{a}\mu_\text{mm}\right)\min\left\{\frac{1}{\lambda},\frac{1}{\mu_\text{p}},\frac{1}{p_\text{a}\mu_\text{mm}}\right\}
\leq  0.	\notag
\end{align}

\item  If $v\left(q_0-1,1,0,0\right)\geq v\left(q_0-2,1,0,1\right)$, which implies the sub-6 GHz interface should be utilized in the case, then $B_2=B_3=v\left(q_0-2,1,0,1\right)$ and $B_1=v\left(q_0-2,0,1,1\right)$. For the best case, the smallest difference between $B_2$ and $B_1$ is $\frac{1}{\mu_\text{p}}$.
Besides, $B_3-B_4\leq  \frac{1}{\mu_\text{sub-6}}$. Thus, inequality \eqref{*} holds.
\end{enumerate}
\item\emph{Case that $q_0=1$:} we have $B_2-B_4\leq \frac{1}{\mu_\text{p}}+\frac{1}{p_\text{a}\mu_\text{mm}}$ and $B_3-B_2\geq \min\left\{\frac{1}{\lambda},\frac{1}{\mu_\text{p}},\frac{1}{p_\text{a}\mu_\text{mm}}\right\}$. Then, the remaining proof is the same as that in (1) for case $q_0>1$.
\end{enumerate}

\emph{\textbf{Property (e):} Now we check monotonicity:}
\begin{align}
&\mathcal{L}v\left(q_0+1,l_1,q_1,l_2\right) \notag\\
=& \left(q_0+l_1+q_1+l_2+1\right)\notag\\&+\beta\min_{\mathbf{u}}\big\{\lambda v\left(u_0\left(\mathcal{A}_0\left(q_0+1,l_1,q_1,l_2\right)\right)\right)\notag\\
&+\mu_\text{p}v\left(u_1\left(\mathcal{T}\left(q_0+1,l_1,q_1,l_2\right)\right)\right)
\notag\\
&+ p_\text{a}\mu_\text{mm}v\left(u_2\left(\mathcal{D}_1\left(q_0+1,l_1,q_1,l_2\right)\right)\right)\notag\\
&+\mu_\text{sub-6}v\left(u_3\left(\mathcal{D}_2\left(q_0+1,l_1,q_1,l_2\right)\right)\right)\big\}\notag\\
\stackrel{\text{(e)}}{\geq} & \left(q_0+l_1+q_1+l_2\right)+\beta\min_{\mathbf{u}}\big \{\lambda v\left(u_0\left(\mathcal{A}_0\left(q_0,l_1,q_1,l_2\right)\right)\right)\notag\\&+\mu_\text{p}v\left(u_1\left(\mathcal{T}\left(q_0,l_1,q_1,l_2\right)\right)\right)
\notag\\
&+p_\text{a}\mu_\text{mm}v\left(u_2\left(\mathcal{D}_1\left(q_0,l_1,q_1,l_2\right)\right)\right)\notag\\
&+\mu_\text{sub-6}v\left(u_3\left(\mathcal{D}_2\left(q_0,l_1,q_1,l_2\right)\right)\right)\big\} \notag \\
=&\mathcal{L}v\left(q_0,l_1,q_1,l_2\right).	\notag
\end{align}
Similarly, we can  obtain that $\mathcal{L}v\left(q_0,l_1,q_1+1,l_2\right)\geq \mathcal{L}v\left(q_0,l_1,q_1,l_2\right)$,
$\mathcal{L}v\left(q_0,1,q_1,l_2\right)\geq \mathcal{L}v\left(q_0,0,q_1,l_2\right)$ and  $\mathcal{L}v\left(q_0,l_1,q_1,1\right)\geq \mathcal{L}v\left(q_0,l_1,q_1,0\right)$. 

\section{Proof of Lemma 1}
Note that $J^0_\beta\left(x,q_1,l_2\right)=x+q_1+l_2$ and $J^0_\beta \in \mathcal{F}$ obviously. By Eq. \eqref{valueiteration}, it remains to show that $T^n_\beta \in \mathcal{F}$ and then $J^{n+1}_\beta \in \mathcal{F}$ given $J^n_\beta \in \mathcal{F}$. 
Before our proof, we provide some properties extended from Definition \ref{F_class}, which will be used in the following proof.

\emph{Extended properties from Definition \ref{F_class}:}
\begin{align}
	&2f\left(x,q_1,1\right)\leq  f\left(x+1,q_1,1\right)+f\left(x-1,q_1,1\right) \label{extension1}\\
	&2f\left(0,q_1,1\right)\leq f\left(1,q_1,1\right)+f\left(0,q_1-1,1\right) \label{extension2}\\
	&2f\left(0,q_1,1\right)\leq f\left(0,q_1+1,1\right)+f\left(0,q_1-1,1\right) \label{extension4}\\
	&2f\left(x+1,q_1,0\right)\leq f\left(x+2,q_1,0\right)+f\left(x,q_1,0\right)\label{extension6}\\
	&2f\left(0,q_1+1,0\right)\leq f\left(0,q_1,0\right)+f\left(0, q_1+2,0\right)\label{extension8}\\
	&f\left(x,q_1,1\right)+f\left(x-1,q_1+1,1\right)\notag\\
	&\ \ \ \ \ \ \ \ \ \ \ \ \ \ \leq   f\left(x,q_1+1,1\right)+f\left(x-1,q_1,1\right) \label{extension3}\\
		&f\left(0,q_1+1,0\right)+f\left(0,q_1+1,1\right)\notag\\
	&\ \ \ \ \ \ \ \ \ \ \ \ \ \ \leq   f\left(0,q_1,1\right)+f\left(1,q_1+1,0\right)\label{extension5}	\\
			&f\left(x+1,q_1,0\right)+f\left(x, q_1+1,0\right)\notag\\
	&\ \ \ \ \ \ \ \ \ \ \ \ \ \  \leq  f\left(x+1,q_1+1,0\right)+f\left(x, q_1,0\right)\label{extension7}
\end{align}
These properties can be obtained from combinations of certain equations in Definition \ref{F_class}. For lack of space, we take Eq. \eqref{extension1} for example, it is obtained by adding Eq. \eqref{A2_x} with $x$ replaced by $x-1$ and Eq. \eqref{supermodular1}.

Given $J_\beta^{n}\in \mathcal{F}$, we first show that $T^n_\beta \in \mathcal{F}$.
\begin{enumerate}
\item\emph{For Eq. (6):} If $T^n_\beta\left(x+2,q_1,0\right)=J^n_\beta\left(x+2,q_1,0\right)$, then
\begin{align}	T^n_\beta\left(x+1,q_1,0\right)+&T^n_\beta\left(x+1,q_1,1\right)\notag\\
	\stackrel{\text{Def. \ref{intermediate value}}}{\leq} & J^n_\beta\left(x+1,q_1,0\right)+J^n_\beta\left(x+1,q_1,1\right)\notag\\
	\stackrel{\text{\eqref{A2_x}}}{\leq}&J^n_\beta\left(x,q_1,1\right)+J^n_\beta\left(x+2,q_1,0\right).\notag
\end{align}
If $T^n_\beta\left(x+2,q_1,0\right)=J^n_\beta\left(x+1,q_1,1\right)$, then
\begin{align}
T^n_\beta\left(x+1,q_1,0\right)+&T^n_\beta\left(x+1,q_1,1\right)\notag\\
	\stackrel{\text{Def. \ref{intermediate value}}}{\leq} &
	 J^n_\beta\left(x,q_1,1\right)+J^n_\beta\left(x+1,q_1,1\right).\notag
\end{align}
Similarly, we can show that Eq. \eqref{A2_q1},   \eqref{Ar} and \eqref{switch} hold.







\item \emph{For Eq. (10):} If $T^n_\beta\left(x,q_1,0\right)=J^n_\beta\left(x,q_1,0\right)$, then
\begin{align}
	T^n_\beta\left(x,q_1,1\right)+&T^n_\beta\left(x+1,q_1,0\right)\notag\\
	\stackrel{\text{Def. \ref{intermediate value}}}{\leq}& J^n_\beta\left(x,q_1,1\right)+J^n_\beta\left(x+1,q_1,0\right)\notag\\
	\stackrel{\eqref{supermodular1}}{\leq}&J^n_\beta\left(x,q_1,0\right)+J^n_\beta\left(x+1,q_1,1\right).\notag
\end{align}
If $x\geq 1$ and $T^n_\beta\left(x,q_1,0\right)=J^n_\beta\left(x-1,q_1,1\right)$, then
\begin{flalign}
	&T^n_\beta\left(x,q_1,1\right)+T^n_\beta\left(x+1,q_1,0\right)\notag \\
	&\stackrel{\text{Def. \ref{intermediate value}}}{\leq} 2J^n_\beta\left(x,q_1,1\right)\stackrel{\eqref{extension1}}{\leq}J^n_\beta\left(x-1,q_1,1\right)+J^n_\beta\left(x+1,q_1,1\right).\notag 
\end{flalign}
If $x=0$, $q_1\geq 1$ and $T^n_\beta\left(0,q_1,0\right)=J^n_\beta\left(0,q_1-1,1\right)$, then 
\begin{align}
	&T^n_\beta\left(0,q_1,1\right)+T^n_\beta\left(1,q_1,0\right)\notag\\
	&\stackrel{\text{Def. \ref{intermediate value}}}{\leq} 2J^n_\beta\left(0,q_1,1\right)\stackrel{\text{\eqref{extension2}}}{\leq}J^n_\beta\left(0,q_1-1,1\right)+J^n_\beta\left(1,q_1,1\right).\notag
\end{align}

\item \emph{For Eq. (11):} If $T^n_\beta\left(x,q_1,0\right)=J^n_\beta\left(x,q_1,0\right)$, then
\begin{align}
	T^n_\beta\left(x,q_1,1\right)+&T^n_\beta\left(x,q_1+1,0\right)\notag\\
	\stackrel{\text{Def. \ref{intermediate value}}}{\leq}& J^n_\beta\left(x,q_1,1\right)+J^n_\beta\left(x,q_1+1,0\right)\notag\\
	\stackrel{\text{\eqref{supermodular2}}}{\leq}&J^n_\beta\left(x,q_1,0\right)+J^n_\beta\left(x,q_1+1,1\right).\notag
\end{align}

\noindent If $x\geq 1$ and $T^n_\beta\left(x,q_1,0\right)=J^n_\beta\left(x-1,q_1,1\right)$, then
\begin{align}
	T^n_\beta\left(x,q_1,1\right)+&T^n_\beta\left(x,q_1+1,0\right)\notag\\
		\stackrel{\text{Def. \ref{intermediate value}}}{\leq}& J^n_\beta\left(x,q_1,1\right)+J^n_\beta\left(x-1,q_1+1,1\right)\notag\\
	\stackrel{\text{\eqref{extension3}}}{\leq}&J^n_\beta\left(x-1,q_1,1\right)+J^n_\beta\left(x,q_1+1,1\right).\notag
\end{align}
If $x=0$, $q_1\geq 1$ and $T^n_\beta\left(0,q_1,0\right)=J^n_\beta\left(0,q_1-1,1\right)$, then
\begin{align}
	&T^n_\beta\left(0,q_1,1\right)+T^n_\beta\left(0,q_1+1,0\right)\notag\\
	&\stackrel{\text{Def. \ref{intermediate value}}}{\leq}2J^n_\beta\left(0,q_1,1\right)\stackrel{\eqref{extension4}}{\leq}J^n_\beta\left(0,q_1-1,1\right)+J^n_\beta\left(0,q_1+1,1\right).\notag
\end{align}




\item \emph{For Eq. (12):} If $T^n_\beta\left(x+1,q_1,l_2\right)=J^n_\beta\left(x+1,q_1,l_2\right)$, then
\begin{align}
	T^n_\beta\left(x,q_1,l_2\right)\stackrel{\text{Def. \ref{intermediate value}}}{\leq}J^n_\beta\left(x,q_1,l_2\right)\stackrel{\eqref{mono1}}{\leq}J^n_\beta\left(x+1,q_1,l_2\right).\notag
\end{align}
If $l_2=0$ and $T^n_\beta\left(x+1,q_1,0\right)=J^n_\beta\left(x,q_1,1\right)$, then
\begin{align}
	T^n_\beta\left(x,q_1,0\right)\stackrel{\text{Def. \ref{intermediate value}}}{\leq} J^n_\beta\left(x,q_1,0\right)\stackrel{\eqref{mono3}}{\leq}J^n_\beta\left(x,q_1,1\right).\notag
\end{align}
Similarly, we obtain Eq. \eqref{mono2} and Eq. \eqref{mono3}. 
\end{enumerate}
Next, we show that $J^{n+1}_\beta \in \mathcal{F}$. According to Eq. \eqref{valueiteration}, we show four terms, say $\lambda$, $\mu_\text{p}$, $\mu_\text{mm}$ and $\mu_\text{sub-6}$ terms, satisfy properties in Definition \ref{F_class}, respectively.
\begin{enumerate}
	\item \emph{For Eq. (6):} the difficulty falls in the $\mu_\text{p}$ and $\mu_\text{sub-6}$ terms. For the $\mu_\text{p}$ term, the difficulty falls in the case with $x=0$, which can be proved with Eq. \eqref{extension5}.
For the $\mu_\text{sub-6}$ term, Eq. \eqref{A2_x} reduces to Eq. \eqref{extension6}.

\item\emph{For Eq. (7):} the $\lambda$ term obviously holds. 
As for the $\mu_\text{p}$ term, the difficulty falls in the case with $x=0$. Actually, it reduces to Eq. \eqref{Ar}.
As for the $\mu_\text{mm}$ term, the difficulty falls in the case with $q_1=0$. In the case, Eq. \eqref{A2_q1} reduces to equality.
As for $\mu_\text{sub-6}$ term, Eq. \eqref{A2_q1} reduces to Eq. \eqref{extension7}.

\item \emph{For Eq. (8):} the $\lambda$ and $\mu_\text{p}$ terms obviously hold.
As for the $\mu_\text{mm}$ term, the difficulty falls in the case with $q_1=0$, where $T^n_\beta(0,0,0)\leq T^n_\beta(0,1,0)$. In fact, the inequality holds by Eq. \eqref{mono2}.
As for the $\mu_\text{sub-6}$ term, Eq. \eqref{Ar} reduces to Eq. \eqref{extension8}.

\item \emph{For Eq. (9):} the $\lambda$, $\mu_\text{p}$ with $x\geq 1$, $\mu_\text{mm}$ with $q_1\geq 1$ and $\mu_\text{sub-6}$ terms hold obviously.
As for the $\mu_\text{p}$ with $x=0$ term, Eq. \eqref{switch} reduces to an equation.
As for the $\mu_\text{mm}$ with $q_1=0$ term, Eq. \eqref{switch} reduces to Eq. \eqref{mono1} with $q_1=0$.

\item \emph{For Eq. (10):} it is obvious that the $\lambda$, $\mu_\text{p}$ with $x\geq 1$, and $\mu_\text{mm}$ terms hold.
Notice that as for the $\mu_\text{sub-6}$ term, Eq. \eqref{supermodular1} reduces to an equality. As for the $\mu_\text{p}$ term with $x=0$, Eq. \eqref{supermodular1} reduces to Eq. \eqref{supermodular2}.

\item \emph{For Eq. (11):} the difficulty falls in
the $\mu_\text{sub-6}$ and $\mu_\text{mm}$ with $q_1=0$ terms. For both of the cases, Eq. \eqref{supermodular2} reduces to an equality.

\item \emph{For Eq. (12):} the only difficulty falls in the $\mu_\text{p}$ term with $x=0$, in which case, Eq. \eqref{mono1} reduces to Eq. \eqref{mono2} with $x=0$.

\item \emph{For Eq. (13):} the only difficulty falls in the $\mu_\text{mm}$ term with $q_1=0$. In the case, Eq. \eqref{mono2} reduces to an equality.

\item \emph{For Eq. (14):} the only difficulty falls in the $\mu_\text{sub-6}$ term. Actually, in the case, Eq. \eqref{mono3} reduces to an equality.
\end{enumerate}
\section{Proof of Lemma 2}
According to Lemma \ref{thm:F_class}, for each $n\in \mathbb{N}$, $J^n_\beta$ satisfies properties \eqref{A2_x}, \eqref{A2_q1}, \eqref{Ar} and \eqref{extension5}. It implies that for either the case $x>0$ or $x=0$, $J^n_\beta\left(x+1,q_1,0\right)-J^n_\beta\left(x,q_1,1\right)$ or $J^n_\beta\left(0,q_1+1,0\right)-J^n_\beta\left(0,q_1,1\right)$ increases as $x+q_1$ increases (due to increase of $x$ or $q_1$ or both). 
In other words, the difference between costs resulted from not-adding-to-sub-6 and adding-to-sub-6 increases as the number of packets in FastLane increases. It is known that $J^n_\beta\left(0,1,0\right)\leq J^n_\beta\left(1,0,0\right)\leq J^n_\beta\left(0,0,1\right)$, which means that it's better to hold the packet in FastLane when there is only one packet in the system. As $x+q_1$ increases, the difference becomes positive, which means that adding-to-sub-6 obtains priority. To sum up, there exists a certain threshold for the queue length of FastLane above which we should add a packet to the sub-6 GHz interface. 
\bibliographystyle{unsrt}
\bibliography{References}
 
\end{document}